\def\doi#1{\href{https://doi.org/\detokenize{#1}}{\url{https://doi.org/\detokenize{#1}}}}
\begin{document}

\title{Safety and Completeness in Flow Decompositions for RNA Assembly~\thanks{We thank Romeo Rizzi and Edin Husi\'c for helpful discussions. This work was partially funded by the European Research Council (ERC) under the European Union's Horizon 2020 research and innovation programme (grant agreement No.~851093, SAFEBIO) and partially by the Academy of Finland (grants No.~322595, 328877).}
}

\titlerunning{Safe and Comp. in Flow Decomp. for RNA Assembly} 
\author{Shahbaz Khan\inst{1}\orcidID{0000-0001-9352-0088} \and
Milla Kortelainen \inst{2} \orcidID{0000-0003-1590-0987} \and
Manuel C\'aceres \inst{2}\orcidID{0000-0003-0235-6951} \and
Lucia Williams \inst{3}\orcidID{0000-0003-3785-0247} \and
Alexandru I. Tomescu\inst{2}\orcidID{0000-0002-5747-8350}}

\authorrunning{S. Khan et al.} 

\institute{Department of Computer Science and Engineering, Indian Institute of Technology Roorkee, India \\
\email{shahbaz.khan@cs.iitr.ac.in\footnote{A major part of the work was done while affiliated with the Department of computer Science, University of Helsinki.}}
\and
Department of Computer Science, University of Helsinki, Finland \\
\email{\{shahbaz.khan,milla.kortelainen,manuel.caceresreyes,alexandru.tomescu\}@helsinki.fi}
\and
School of Computing, Montana State University, USA \\
\email{luciawilliams@montana.edu}
}

\bibliographystyle{splncs04}

\newcommand{\sbzk}[1]{\textcolor{cyan}{(sbzk: #1)}}
\newcommand{\Sbzk}[1]{\textcolor{cyan}{#1}}
\newcommand{\alex}[1]{\textcolor{brown}{#1}}
\newcommand{\alexcom}[1]{\textcolor{brown}{(Alex: #1)}}
\newcommand{\lucycom}[1]{{\color{Green} (Lucy: #1)}}
\newcommand{\arielcom}[1]{\textcolor{teal}{(Ariel: #1)}}
\newcommand{\millacom}[1]{\textcolor{olive}{(Milla: #1)}}

\maketitle

\begin{abstract}

Decomposing a network flow into weighted paths is a problem with numerous
applications, ranging from networking, transportation planning to bioinformatics.
In some applications we look for any decomposition that is optimal with respect
to some property, such as number of paths used, robustness to edge deletion,
or length of the longest path.
However, in many bioinformatic applications, we seek a specific
decomposition where the paths correspond to some underlying data that generated the flow.
For realistic inputs,
no optimization criteria can be guaranteed to uniquely
identify the correct decomposition. Therefore, we propose to instead report the {\em safe} paths, which are subpaths of at least one path in every flow decomposition.

~~~~Recently, Ma, Zheng, and Kingsford [WABI 2020] addressed the existence
of multiple optimal solutions in a probabilistic framework, which is referred to as
\emph{non-identifiability}.
In a follow-up work [RECOMB 2021], they gave a quadratic-time algorithm based on a {\em global} criterion for solving a problem called AND-Quant, which 
generalizes the problem of reporting whether a given path is safe.

~~~~~~In this work, we give the first {\em local} characterization of safe paths for flow decompositions in directed acyclic graphs (DAGs), leading to a practical algorithm for finding the \emph{complete} set of safe paths. We additionally evaluated our algorithms against the trivial safe algorithms (unitigs, extended unitigs) and the popularly used heuristic (greedy-width) for flow decomposition on RNA transcripts datasets. We find that despite maintaining perfect precision the  safe and complete algorithm reports significantly higher coverage ($\approx$50\% more) as compared to trivial safe algorithms. The greedy-width algorithm though reporting a better coverage, reports significantly lower precision on complex graphs (for genes expressing a large number of transcripts). Overall, our safe and complete algorithm outperforms (by $\approx$20\%) greedy-width on a unified metric (F-Score) considering both coverage and precision when the evaluated dataset has significant number of complex graphs. Moreover, it also has superior time (3$-$5$\times$) and space efficiency (1.2$-$2.2$\times$), resulting in a better and more practical approach for bioinformatics applications of flow decomposition.

\keywords{safety \and flow networks \and flow decomposition \and directed acyclic graphs \and RNA assembly} 

\end{abstract}

\newpage
\section{Introduction}

Network flows are a central topic in computer science, enabling us to define problems
with countless practical applications. Assuming that the flow network has a unique source $s$ and a
unique sink $t$, every flow can be decomposed into a collection of weighted
$s$-$t$ paths and cycles~\cite{FordF10}; for directed acyclic graphs (DAGs), such a decomposition contains
only paths. One application of such a path (and cycle) view of a flow is to indicate how to
optimally route information or goods
from $s$ to $t$. For example, flow decomposition is a key step in
network routing problems~\cite{hong2013achieving,cohen2014effect,hartman2012split,mumey2015parity} and
transportation problems~\cite{Ohst:2015aa,Olsen:2020aa}.
Finding the decomposition with the minimum number of paths and {\em possibly} cycles (or {\em minimum flow decomposition}) is NP-hard, even if the flow network is a DAG~\cite{vatinlen2008simple}. On the theoretical side, this hardness result led to research on approximation algorithms~\cite{hartman2012split,SUPPAKITPAISARN2016367,pienkosz2015integral,mumey2015parity,baier2002k,baier2005k}, and FPT algorithms~\cite{kloster2018practical}. On the practical side, many approaches usually employ a standard \emph{greedy-width} heuristic~\cite{vatinlen2008simple}, of repeatedly removing an $s$-$t$ path carrying the most amount of flow. Another pseudo-polynomial-time heuristic called \emph{Catfish} was recently proposed by~\cite{shao2017theory}, which tries to iteratively simplify the graph so that smaller decompositions can be found.

On the other hand,
we may observe a flow network built by superimposing a set of weighted paths,
and seek the decomposition corresponding to that underlying set of paths and weights.
This is the decomposition sought by the more
recent and prominent application of reconstructing biological
sequences~(\emph{RNA transcripts}~\cite{pertea2015stringtie,tomescu2013novel,gatter2019ryuto,bernard2013flipflop,TomescuGPRKM15,williams2019rna}
or \emph{viral quasi-species genomes}~\cite{DBLP:conf/recomb/BaaijensSS20,BaaijensRKSS19}).
Each flow path represents a reconstructed sequence, and so a different set of flow
paths encodes a different set biological sequences, which may differ from the real
ones. If there are
multiple optimal flow decomposition solutions, then the reconstructed sequences
may not match the original ones, and thus be incorrect.
While many popular multiassembly tools look for minimum flow decompositions,
Williams et al.~\cite{williams2021flow} analyzed an error-free transcript
dataset to find that  20\% of human genes admit multiple minimum flow decomposition solutions.

\subsection{Safety Framework for Addressing Multiple Solutions}

Motivated by such an RNA assembly application,  Ma et al.~\cite{DBLP:conf/wabi/MaZK20} were the first to address the issue of multiple solutions to the flow decomposition problem under a probabilistic framework. Later, they~\cite{findingranges} solve a problem (\emph{AND-Quant}), which, in particular, leads to a quadratic-time algorithm for the following problem: given a flow in a DAG, and edges $e_1,e_2,\dots,e_k$, decide if in \emph{every} flow decomposition there is always a decomposed flow path passing through all of $e_1,e_2,\dots,e_k$. Thus, by taking the edges $e_1,e_2,\dots,e_k$ to be the edges of a path $P$, the AND-Quant problem can decide if a path $P$ (i.e., a given biological sequence) appears in all flow decompositions. This indicates that $P$ is likely part of some original RNA transcript.

We build upon the AND-Quant problem, by addressing the flow decomposition problem under the \emph{safety} framework~\cite{tomescu2017safe}, first introduced for genome assembly. For a problem admitting multiple solutions, a partial solution is said to be \emph{safe} if it appears in all solutions to the problem. For example, a path $P$ is safe for the flow decomposition problem, if for \emph{every} flow decomposition into paths ${\cal P}$, it holds that $P$ is a subpath of some path in $\cal P$. 
Further, 
a path $P$ is called \emph{$w$-safe} if in \textit{every} flow decomposition, $P$ is a subpath of some weighted path(s) in ${\cal P}$ whose total weight is at least $w$. 
In bioinformatics applications, it is common~\cite{tomescu2013novel,pertea2015stringtie,shao2017theory,kloster2018practical}
to look for a minimum cardinality path decomposition (or path cover, as in the case of \cite{trapnell2010transcript,liu2017strawberry}).
In this paper, we will consider \emph{any} flow decomposition as a valid solution, not only the ones of minimum cardinality. Dropping the minimality criterion is motivated by both theory and practice. On the one hand, since finding one minimum-cardinality flow decomposition is NP-hard~\cite{vatinlen2008simple}, we believe that finding all safe paths for them is also intractable. On the other hand, given the various issues with sequencing data, practical methods usually incorporate different variations of the minimum-cardinality criterion~\cite{bernard2013flipflop,DBLP:conf/recomb/BaaijensSS20,BaaijensRKSS19}. Thus, safe paths for \emph{all} flow decompositions are likely correct for many practical variations of the flow decomposition problem. 

Safety has precursors in combinatorial optimization, under the name of \emph{persistency}. For example, persistent edges present in all maximum bipartite matchings were studied by Costa~\cite{Costa1994143}. Incidentally, persistency has also been studied for the maximum flow problem, by finding persistent edges always having a non-zero flow value in any maximum flow solution (\cite{CECHLAROVA2001121} acknowledges~\cite{Lacko:1998aa} for first addressing the problem),
which is easily verified if the maximum flow decreases after removing the corresponding edge. 

In bioinformatics, safety has been previously studied for the genome assembly problem which at its core solves the problem of computing arc-covering walks on the assembly graph. Again since the problem admits multiple solutions where only one is correct, practical genome assemblers output only those solutions likely to be correct. The prominent approach dating back to 1995~\cite{KM95} is to compute trivially correct \emph{unitigs} (having internal nodes with \emph{unit} indegree and unit outdegree), which can be computed in linear time. Unitigs were generalised first in~\cite{PTW01}, and later~\cite{MGMB07,jacksonthesis,kingsford10} to be {\em extended} by adding their unique incoming and outgoing paths. These {\em extended unitigs}, though safe,
are not guaranteed to report \textit{everything} that can be correctly assembled,
presenting an important open question~\cite{Guenoche92,boisvert2010ray,nagarajan2009parametric,ShomoronyKCT16,LamKT14,BBT13} about the {\em assembly limit} (if any). This was finally resolved by Tomescu and Medvedev~\cite{tomescu2017safe} for a specific genome assembly formulation (single circular walk) by introducing {\em safe and complete} algorithms, which report everything that is theoretically reported as safe. Its running time was later optimized in~\cite{CairoMART19} and~\cite{cairo2020macrotigs}. Safe and complete algorithms were also studied by Acosta et al.~\cite{acosta2018safe} under a different genome assembly formulation of multiple circular walks. Recently, C\'aceres et al.~\cite{caceres2021safety} studied safe and complete algorithms for path covers in an application on RNA Assembly. 

\subsection{Safety in Flow Decomposition for RNA Assembly}

The prominent application of flow decomposition in bioinformatics is RNA transcript assembly, which is described as follows.
In complex
organisms, a gene may produce multiple RNA molecules
(\emph{RNA transcripts}, i.e., strings over an alphabet
of four characters), each having a different abundance. Currently, given a
sample, one can partially read the RNA transcripts and find their abundances using
\emph{high-throughput sequencing}~\cite{citeulike:3614773}. This technology
produces short overlapping substrings of the RNA transcripts. The main approach
for recovering the RNA transcripts from such data is to build an edge-weighted
DAG from these fragments, then to transform the weights into flow values by
various optimization criteria, and finally to decompose the resulting flow into an
``optimal'' set of weighted paths (i.e., the RNA transcripts and their
abundances in the sample)~\cite{DBLP:books/cu/MBCT2015}. A common strategy
for choosing the optimal set of weighted paths is to look for the parsimonious
solution, i.e., the solution with the fewest paths. Since this problem is NP-hard,
in practice many tools use the popular {\em greedy-width} heuristic \cite{tomescu2013novel,pertea2015stringtie}.
Greedy-width is also used in the assemblers for the related problem of viral quasispecies assembly~\cite{BaaijensRKSS19}. 
Further, some tools attempt to incorporate
additional information into the flow decomposition process, such as by using longer reads
or super reads~\cite{pertea2015stringtie,shao2017accurate,williams2021flow,gatter2019ryuto}.
Despite the large number of tools and methods that have been developed for
RNA transcript assembly, there is no method that consistently reports the correct
set of transcripts \cite{pertea2015stringtie,yu2020transborrow}. This suggests that the addressing the problem under the safety framework may be a promising approach. However, while a safe and complete solution clearly gives the maximally reportable correct solution,
it is significant to evaluate whether such a solution covers a large part of the true transcript, to be useful in practice. A possible application of such partial and  reliable solution is to consider them as constrains (see e.g. \mbox{\cite{williams2021flow}}) of real RNA transcript assemblers, to guide the assembly process of such heuristics. Another possible application could be to evaluate the accuracy of assemblers: does the output of the assembler include the safe and complete solution?.

\subsection{Our Results}
Our contributions can be succinctly described as follows.
\begin{enumerate}
\item \textbf{A simple local characterization resulting in an optimal verification algorithm}: 
We give a characterization for a safe path $P$ using its local property called {\em excess flow}.

\begin{restatable}{theorem}{flowSafety}
\label{thm:flowSafety}
For $w>0$, a path $P$ is $w$-safe iff its excess flow $f_P\geq w$.
\end{restatable}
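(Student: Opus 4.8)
The plan is to prove both directions by quantifying how much flow is \emph{forced} to traverse the entire path $P = e_1 e_2 \cdots e_k$ (with internal vertices $v_1,\dots,v_{k-1}$) in an arbitrary decomposition, and comparing it against the outflow that can ``leave'' $P$ at each internal vertex. I will work with the excess flow in the form $f_P = f(e_1) - \sum_{i=1}^{k-1} o_i$, where $o_i$ is the flow leaving $v_i$ on out-edges other than $e_{i+1}$. As a sanity check I would first record the symmetric identity $f_P = f(e_k) - \sum_{i=1}^{k-1} \iota_i$, with $\iota_i$ the inflow at $v_i$ off $P$, which follows from flow conservation at each $v_i$ and reassures that the quantity does not depend on the chosen endpoint. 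Note also that $f_P \le \min_j f(e_j)$, so $f_P>0$ already forces every on-path edge to carry positive flow.

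For the forward direction (if $f_P \geq w$ then $P$ is $w$-safe) I would prove a lower bound holding in \emph{every} decomposition $D$. Let $W_i$ be the total weight of paths of $D$ traversing the whole prefix $e_1 \cdots e_i$. Then $W_1 = f(e_1)$, and at $v_i$ the prefix-traversing paths that divert away from $e_{i+1}$ can consume at most $o_i$ of weight (they occupy flow on the non-$P$ out-edges at $v_i$), so $W_{i+1} \geq W_i - o_i$. Telescoping gives $W_k \geq f(e_1) - \sum_{i} o_i = f_P \geq w$, and since $W_k$ is exactly the total weight of paths containing $P$, the path is $w$-safe. This is the short, easy half.

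For the backward direction I would prove the contrapositive constructively: if $f_P < w$, exhibit a single decomposition whose total weight of $P$-containing paths is $\max(0,f_P) < w$. If $f_P>0$, I would extend $P$ to an $s$-$t$ path along positive-flow edges (possible since $f_P>0$ forces all on-path edges positive) and peel off $\max(0,f_P)$ units along it; because peeling touches only the on-path edges, a direct computation shows the residual flow has excess $0$ on $P$. It then suffices to decompose a flow with excess at most $0$ on $P$ so that no path uses all of $P$: here the capacity inequality $\sum_i o_i \ge f(e_1)$ guarantees enough room to divert the chain-flow, and I would do so by repeatedly extracting $s$-$t$ paths that follow a prefix $e_1 \cdots e_i$ and then leave $P$ at the first vertex $v_i$ whose off-path out-budget is not yet spent, decreasing the flow on $e_1,\dots,e_i$ each time.

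The main obstacle is making this last construction rigorous, namely showing that the greedy prefix-then-escape extraction (i) keeps the residual a valid conservative flow that still admits decomposition, and (ii) terminates with no flow simultaneously present on all of $e_1,\dots,e_k$, so that the leftover decomposition contributes nothing through $P$. I expect to secure both by induction on the path length $k$, reducing $P$ to $e_2 \cdots e_k$ after resolving $v_1$, or equivalently via a vertex-splitting gadget that forbids the $e_i \to e_{i+1}$ turn once the budget $o_i$ is exhausted, with $\sum_i o_i \ge f(e_1)$ ensuring feasibility. The forward bound then certifies tightness, i.e.\ that the construction attains exactly $\max(0,f_P)$, closing the equivalence.
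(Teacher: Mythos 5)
Your proposal is sound, but it takes a genuinely different route from the paper's proof in both directions, so a comparison is in order. For sufficiency ($f_P\geq w$ implies $w$-safe), the paper simply asserts this ``by definition,'' whereas you prove it via the telescoping invariant $W_{i+1}\geq W_i-o_i$ on the weight of prefix-traversing paths; this is a rigorous rendering of the leakage intuition that the paper leaves implicit. For necessity, the paper uses an \emph{exchange} argument: starting from an arbitrary flow decomposition carrying weight $w>f_P$ through $P$, it finds a decomposed path $P'$ entering $P$ at an internal vertex and leaving at an internal vertex (such $P'$ exists precisely because the through-weight exceeds $f_P$), swaps suffixes between $P'$ and a path containing all of $P$, and repeats until the through-weight drops to $\max(0,f_P)$. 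You instead build the witness decomposition from scratch: peel $\max(0,f_P)$ units through $P$, then greedily extract prefix-then-escape paths against the budget $\sum_i o_i \geq f(e_1)$ of the residual, then decompose what is left arbitrarily. Your construction makes the tightness explicit and avoids manipulating an unknown decomposition, at the price of the feasibility/termination argument you flag as the main obstacle; that argument does go through, since whenever $e_1$ still carries residual flow, the off-path budget consumed so far equals the total weight extracted so far, which is strictly less than $f(e_1)\leq\sum_i o_i$, so some escape vertex remains available, and choosing each extraction's weight as the minimum residual flow along it zeroes at least one edge, forcing termination after at most $m$ extractions. One detail needs repair: peeling $\max(0,f_P)$ units ``along it,'' i.e.\ along a \emph{single} $s$-$t$ extension of $P$, is not always possible, because the extension edges outside $P$ (e.g.\ the in-edges of $P$'s first vertex) may each carry less than $f_P$ even though their total suffices; instead peel several weighted paths, each containing all of $P$, whose weights sum to $\max(0,f_P)$ --- the same positivity argument (residual excess still positive implies all on-path edges and some extension carry positive flow) permits this, and only the total through-weight matters.
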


The previous work~\cite{findingranges} on AND-Quant describes a {\em global} characterization using the maximum flow of the entire graph transformed according to $P$, requiring $O(mn)$ time. Instead, the excess flow is a {\em local} property of $P$ which is thus computable in time linear in the length of $P$. This also directly gives a simple verification algorithm which is optimal.

\begin{restatable}{theorem}{safePathVerification}
Given a flow graph (DAG)  having $n$ vertices and $m$ edges, it can be preprocessed in $O(m)$ time to verify the safety of any path $P$ in $O(|P|)=O(n)$  time.
\end{restatable}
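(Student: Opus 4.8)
The plan is to reduce the verification task entirely to computing the excess flow $f_P$, since by \Cref{thm:flowSafety} a path $P$ is $w$-safe exactly when $f_P \ge w$ (and, in particular, safe exactly when $f_P > 0$). Once $f_P$ is known, the decision is a single comparison, so the whole problem becomes: after a cheap preprocessing, evaluate $f_P$ for a query path $P$ in time $O(|P|)$. The key observation I would exploit is that the excess flow, being a local quantity, unwinds into a sum of terms each attached to a single edge or a single internal vertex of $P$. Writing $P = (e_1,\dots,e_k)$ with internal vertices $v_1,\dots,v_{k-1}$, I expect the definition of excess flow to rearrange into the form
\begin{equation*}
f_P = \sum_{i=1}^{k} w(e_i) \;-\; \sum_{i=1}^{k-1} f^{\mathrm{out}}(v_i),
\end{equation*}
where $w(e)$ is the flow on edge $e$ and $f^{\mathrm{out}}(v)$ is the total flow leaving $v$ (equivalently $f^{\mathrm{in}}(v)$ for internal vertices, by conservation). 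Every summand is thus a purely local value of one edge or one vertex of $P$.

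For the preprocessing step I would compute and store $f^{\mathrm{out}}(v)$ for all vertices $v$ with a single scan over the edge set, accumulating each edge's flow into the out-flow counter of its tail. This runs in $O(m)$ time and $O(n)$ space, and the edge flow values $w(e)$ are already part of the input with $O(1)$ access. For a query, I would then traverse $P$ once, adding $w(e_i)$ for each edge and subtracting the precomputed $f^{\mathrm{out}}(v_i)$ for each internal vertex, yielding $f_P$ in $O(|P|)$ time; a final $O(1)$ comparison against $w$ (resp.\ against $0$) decides $w$-safety (resp.\ safety) via \Cref{thm:flowSafety}. Since a path in a DAG visits each vertex at most once, $|P| \le n$, which gives the stated $O(|P|) = O(n)$ bound. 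Both parts are optimal in the obvious sense: any verifier must read the query path, forcing $\Omega(|P|)$ query time, while the preprocessing is linear in the size of the flow graph.

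Given that \Cref{thm:flowSafety} already supplies the correctness of the excess-flow criterion and the preceding discussion asserts that the excess flow is local, I do not expect a genuine obstacle here; the proof is essentially bookkeeping. The only points requiring care are (i) verifying that the definition of excess flow indeed telescopes to the local form above, so that the query is answered without recomputing any global quantity such as a maximum flow, and (ii) fixing the input convention for the query path: the $O(1)$ edge-flow lookups require $P$ to be given as a sequence of edges, or equivalently that the edge joining two consecutive query vertices can be located in constant time. I would state this convention explicitly so that the per-element cost of the traversal is genuinely constant.
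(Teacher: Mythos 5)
Your proposal is correct and matches the paper's own proof: the paper likewise precomputes $f_{in}(u)$ and $f_{out}(u)$ for all vertices in $O(m)$ time and then evaluates the excess flow via the local form in \Cref{obs:altCriteria} (exactly your telescoped formula, edge flows minus $f_{out}$ of internal vertices) in $O(|P|)$ time, comparing against $0$ by \Cref{thm:flowSafety}. Your added remarks on input conventions and the $\Omega(|P|)$ lower bound are fine but not needed.
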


\item \textbf{Simple enumeration algorithm}: 
The above characterization also results in a simple algorithm for reporting all maximal safe paths by using an arbitrary flow decomposition of the graph. 

\begin{restatable}{theorem}{safePathEnumeration}
Given a flow graph (DAG) having $n$ vertices and $m$ edges, all its maximal safe paths can be reported in $O(|{\cal P}_f|)=O(mn)$  time, where ${\cal P}_f$ is some flow decomposition. 
\end{restatable}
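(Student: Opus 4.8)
The plan is to leverage two consequences of \Cref{thm:flowSafety}: that safety is hereditary under taking contiguous subpaths, and that the excess flow $f_P$ can be updated in $O(1)$ time when $P$ is grown or shrunk at either end. The starting observation is that every safe path is a subpath of \emph{some} path of the fixed decomposition $\mathcal{P}_f$: by definition a safe path occurs as a subpath in every decomposition, in particular in $\mathcal{P}_f$. Hence it suffices to search among the contiguous subpaths of the paths of $\mathcal{P}_f$. Moreover, any contiguous subpath of a safe path is again safe, so safety is \emph{subpath-closed}. I would first compute $\mathcal{P}_f$ greedily in $O(mn)=O(|\mathcal{P}_f|)$ time, and in an $O(m)$ preprocessing pass store, for every vertex $u$, its total incoming and outgoing flow $f_{\mathrm{in}}(u),f_{\mathrm{out}}(u)$ together with the largest flow on a single incoming and a single outgoing edge.

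Next I would process each decomposition path $R=(u_0,\dots,u_k)$ by a two-pointer sweep. From the definition of excess flow one checks that extending a path to the right by an edge $e=(v_\ell,v_{\ell+1})$ replaces $f_P$ by $f_P-f_{\mathrm{out}}(v_\ell)+f(e)$, while deleting the first edge $e_1=(v_0,v_1)$ replaces it by $f_P+f_{\mathrm{out}}(v_1)-f(e_1)$; both updates are $O(1)$. Since $f_{\mathrm{out}}(v_\ell)\ge f(e)$ and $f_{\mathrm{out}}(v_1)\ge f(e_1)$ at internal vertices, growing to the right never increases $f_P$ and shrinking from the left never decreases it, which re-derives subpath-closedness. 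Consequently, for a fixed left endpoint $i$ the largest right endpoint $j(i)$ with $R[i..j]$ safe is non-decreasing in $i$: indeed $R[i{+}1..j(i)]$ is a subpath of the safe $R[i..j(i)]$, so $j(i{+}1)\ge j(i)$. Thus a single forward sweep of both pointers enumerates all subpaths of $R$ that are inclusion-maximal \emph{within} $R$ in $O(|R|)$ time, for a total of $O(\sum_{R}|R|)=O(|\mathcal{P}_f|)$.

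The step I expect to be the main obstacle is reconciling maximality within a single decomposition path with maximality in the whole graph. A window $R[i..j]$ that cannot be extended along $R$ may still be extendable through an edge not belonging to $R$; conversely, a graph-maximal safe path is maximal within every decomposition path that contains it, so the true answers form a subset of the windows produced above. I would filter using the precomputed per-vertex values: by the incremental formula, $R[i..j]$ is extendable to the right in the graph iff $f_{R[i..j]}-f_{\mathrm{out}}(u_j)+\max_{e\in\mathrm{out}(u_j)}f(e)>0$, and symmetrically on the left in terms of $f_{\mathrm{in}}(u_i)$ and $\max_{e\in\mathrm{in}(u_i)}f(e)$; each test is $O(1)$, and a window survives exactly when it is extendable in neither direction. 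The remaining difficulty is that the same graph-maximal safe path is discovered once for each decomposition path containing it, so I must deduplicate the survivors (for instance by hashing endpoints together with $f_P$, or by charging each maximal path to a canonical containing path) so that the reporting stays within $O(|\mathcal{P}_f|)$. The crux of the correctness argument is verifying that this filtering is simultaneously sound, in that no genuinely maximal safe path is discarded, and complete, in that no strictly extendable path survives.
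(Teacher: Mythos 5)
Your proposal is correct and its core is the same as the paper's: compute an arbitrary flow decomposition ${\cal P}_f$ in $O(mn)$ time, then run a two-pointer scan over each decomposition path, maintaining the excess flow of the current window with $O(1)$ updates via \Cref{lem:addEdge}, for $O(|{\cal P}_f|)=O(mn)$ total time. Where you genuinely differ is in the treatment of maximality. The paper's scan reports windows that are maximal \emph{within} their decomposition path and does not address, inside the proof, the possibility that such a window is still extendable through an edge lying outside that path; such non-maximal outputs (and duplicates across paths) are only cleaned up afterwards, in the Aho--Corasick post-processing that builds the concise representation ${\cal P}_c$. Your explicit $O(1)$ filter---a window $R[i..j]$ is right-extendable in the graph iff $f_{R[i..j]}-f_{out}(u_j)+\max_{e\in\mathrm{out}(u_j)}f(e)>0$, and symmetrically on the left---is a sound and complete test for graph-level maximality, since by \Cref{lem:addEdge} the most favorable one-edge extension is the one using the maximum-flow edge; combined with your (correct) observation that every graph-maximal safe path shows up as a within-path-maximal window of some path of ${\cal P}_f$, your raw output is exactly the set of maximal safe paths, each possibly repeated. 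So your version buys a cleaner guarantee from the scan itself, at the cost of storing two extra values per vertex. One caveat: your suggested deduplication key (endpoints together with $f_P$) is not sound, because two distinct maximal safe paths can share both endpoints and excess flow (e.g., parallel safe paths $u\to a\to v$ and $u\to b\to v$ each carrying half of the flow entering $u$ and leaving $v$); this does not endanger the theorem, since duplicates reported as index pairs still fit in the $O(mn)$ budget, and the paper likewise relegates deduplication to post-processing, but you should replace that key with an exact method such as the trie/Aho--Corasick step the paper uses.
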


This approach starts with a candidate solution and uses the characterization on its subpaths in an efficient manner (a similar approach was previously used by~\cite{Costa1994143,acosta2018safe,caceres2021safety}).
The solution of the algorithm is reported using a compact representation (referred as ${\cal P}_c$), whose size can be   $\Omega(mn)$ is the worst case but merely $O(m+n)$ in the best case.\\ 
 
 \item \textbf{Empirically improved approach for RNA assembly:}
 Using simulated RNA splice graphs, we found that safe and complete paths for flow decomposition provide precise RNA assemblies while covering most of RNA transcripts. Safe and complete paths are $\approx 50\%$ better in coverage over previous notions of safe paths, while maintaining the perfect precision ensured by safety.
 Further, for the combined metric for coverage and precision (F-Score), the safe and complete paths outperform the popularly used greedy-width heuristic significantly ($\approx 20\%$)  and previous safety algorithms appreciably ($\approx 13\%$). Finally, though our approach takes $1.2-2.5\times$ time than the previous safety algorithms requiring equivalent memory, the greedy-width approach requires roughly $3-5\times$ time and $1.2-2.2\times$ memory than our approach. 
Hence, the significance of our approach in quality parameters increases with the increase in complex graph instances in the dataset, while the performance parameters are significantly better than greedy-width, without losing a lot over the previous safe algorithms.

\end{enumerate}

\section{Preliminaries and Notations} \label{prelim}
We consider a DAG $G=(V,E)$ with $n$ vertices and $m$ edges, where each edge $e$ has a positive flow $f(e)$ passing through it (also called its {\em weight}). Without loss of generality, we assume the graph is connected, and hence $m\geq n$.
For each vertex $u$, $f_{in}(u)$ and $f_{out}(u)$ denote the total flow on its incoming edges and outgoing edges, respectively. A vertex $v$ in the graph is called a {\em source} if $f_{in}(v)=0$ and a {\em sink} if $f_{out}(v)=0$. Every other vertex $v$ satisfies the {\em conservation of flow} $f_{in}(v)=f_{out}(v)$, making the graph a {\em flow graph}. 
For a path $P$ in the graph, $|P|$ denotes the number of its edges. For a set of paths ${\cal P}=\{P_1,\cdots,P_k\}$ we denote its total size (number of edges) by $|{\cal P}|=|P_1|+\cdots + |P_k|$. 

For any flow graph (DAG), its {\em flow decomposition} is a set of weighted {\em paths} ${\cal P}_f$ such that the flow on each edge of the flow graph equals the sum of the weights of the paths containing the edge. A flow decomposition of a graph can be computed in $O(|{\cal P}_f|)=O(mn)$ time using the simple path decomposition algorithm~\cite{AhukaMO93}.
A path $P$ is called \emph{$w$-safe} if, in every possible flow decomposition, $P$ is a subpath of some paths in ${\cal P}_f$ whose total weight is at least $w$. 
If $P$ is  $w$-safe with $w > 0$, we call $P$ a {\em safe flow path}, or simply {\em safe path}. Intuitively, for any edge $e$ with non-zero flow, we consider {\em where did the flow on $e$ come from?} We would like to report all the maximal paths ending with $e$ along which some $w > 0$ weight always ``flows'' to $e$ (see \Cref{fig:safeFlow}).
A safe path is {\em left maximal} (or {\em right maximal}) if extending it to the left (or right) with any edge makes it unsafe (i.e. not safe). A safe path is {\em maximal} if it is both left and right maximal. A set of safe paths is called {\em complete} if it consists of {\em all} the maximal safe paths.

\begin{figure}[t]
    \centering
\includegraphics[trim=5 15 20 5, clip, scale=1]{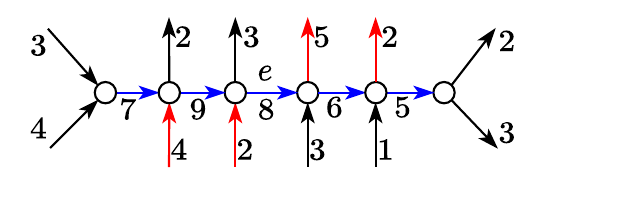}
    \caption{The prefix of the path (blue) up to $e$ contributes at least $2$ units of flow to $e$, as the rest may enter the path by the edges (red) with flow $4$ and $2$. Similarly, the suffix of the path (blue) from $e$ maintains at least $1$ unit of flow from $e$, as the rest may exit the path from the edges (red) with flow $5$ and $2$. Both these safe paths are {\em maximal} as they cannot be extended left or right.
    }
    \label{fig:safeFlow}
\end{figure}

Some previous notions used to describe safety for other problems also naturally extend to the flow decomposition problem as follows. The paths having internal nodes with unit indegree and unit outdegree are called {\em unitigs}~\cite{KM95}, which are trivially safe because every source-to-sink path which passes through an edge of unitig, also passes through the entire unitig contiguously. Further, a unitig can naturally be {\em extended} to include its unique incoming path (having nodes with unit indegree), and its unique outgoing path (having nodes with unit outdegree). This maximal extension of a unitig is called the {\em extended unitig}~\cite{PTW01,MGMB07,jacksonthesis,kingsford10}, which is also safe using the same argument. 

For some graphs the above notions already define the safety of flow decomposition {\em completely}. 
Recently, Millani et al.~\cite{millani2020efficient} defined a class of DAGs called 
\emph{funnels}, where every source-to-sink path is uniquely identifiable by at
least one  
edge, which is not used by any other source-to-sink path. Hence, considering such an edge as a trivial unitig (having a single edge), its extended unitig is exactly the corresponding source-to-sink path, making it safe. Thus, in a funnel  all the source-to-sink paths are naturally safe and hence trivially complete. Moreover, it implies that a  funnel has a unique flow decomposition, making the problem trivial for funnel instances. 

However, for non-funnel graphs unitigs and extended unitigs are safe but potentially not complete.
Note that  both unitigs and  extended unitigs are also safe for other problems dealing with unweighted graphs, such as path cover. Hence, they do not make use of the flows on the edges of the graph, potentially missing some paths that are safe for flow decomposition but not for unweighted problems like path cover.

\section{Characterization and Properties of Safe and Complete Paths}

The safety of a path can be characterized by its \emph{excess flow} (see \Cref{fig:excessP}), defined as follows.

\begin{definition}[Excess flow]
The \emph{excess flow} $f_P$ of a path $P=\{u_1,u_2,...,u_k\}$ is 
\begin{equation*}
f_P= f(u_1,u_2) - \sum_{\substack{u_i\in \{u_2,...,u_{k-1}\} \\ 
v\neq u_{i+1}}} f(u_i,v)
= f(u_{k-1},u_k) - \sum_{\substack{u_i\in \{u_2,...,u_{k-1}\} \\ 
v\neq u_{i-1}}} f(v,u_i) \end{equation*}
where the former and later equations are called {\em diverging} and {\em converging} formulations, respectively. \label{def:dispConv}
\end{definition}
\begin{remark} Alternatively, the converging and diverging formulations can be described as 
 \[f_P = \sum_{i=1}^{k-1} f(u_i,u_{i+1}) - \sum_{i= 2}^{k-1} f_{out}(u_i) =  \sum_{i=1}^{k-1} f(u_i,u_{i+1}) - \sum_{i=2}^{k-1} f_{in}(u_i).\]  
\label{obs:altCriteria}
\end{remark}

The converging and diverging formulations are equivalent by the conservation of flow on internal vertices. The idea behind the notion of an excess flow $f_P$ is that even in the worst case, the maximum {\em leakage} (see Figure~\ref{fig:excessP}), i.e., the flow leaving (or entering) $P$ at the internal nodes, is the sum of the flow on the outgoing (or incoming) edges of the internal nodes of $P$, that are not in $P$. However, if the value of incoming flow (or outgoing flow) is higher than this maximum leakage, then this excess value $f_P$ necessarily passes through the entire $P$.
The following results give the simple characterization and an additional property of safe paths.

\flowSafety*

\begin{proof}
The excess flow $f_P$ of a path $P$ trivially makes it $w\leq f_P$-safe by definition. 
If $f_P<w$, we can prove that $P$ is not $w$-safe by modifying any flow decomposition having $w$ flow on $P$ to leave only $f_P$ flow (or $0$, if $f_P<0$) on $P$ as follows. In \Cref{fig:excessP} (diverging), consider a flow path $P'$ entering $P$ through edge $e_1$ (except first edge (blue)) and leaving $P$ at an edge $e_2$ (red) except last edge of $P$. 
Since $f_P<w$, it is not possible that every path leaving $P$ using a red edge starts at the first blue edge (by definition of $f_P$), hence $P'$ always exists.
We modify $P'$ by using flow on $P$ to form two paths, which enter from $e_1$ and leave at the last edge, and which enter from the first edge and leave at $e_2$. We can repeat such modifications until flow on $P$ is $f_P$ (or $0$, if $f_P<0$) due to conservation of flow.
Additionally, for a path to be safe, it must hold that $w>0$. 
\end{proof}

\begin{figure}[t]
    \centering
\includegraphics[trim=5 7 10 7, clip,scale=.9]{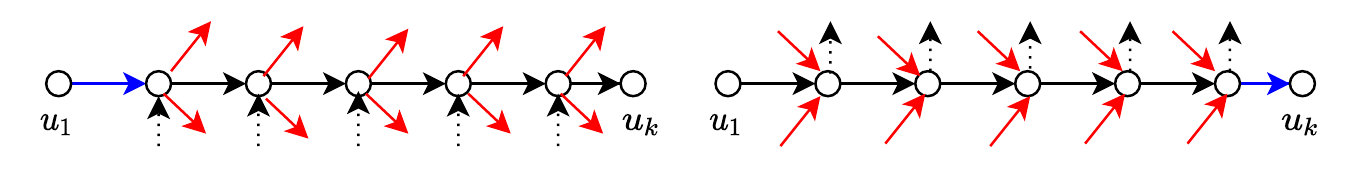}
    \caption{The excess flow of a path (left) is the incoming flow (blue) that necessarily passes through the whole path despite the flow (red) leaving the path at its internal vertices. It can be analogously described (right) as the incoming flow (blue) that necessarily passes through the whole path despite the flow (red) entering the path at its internal vertices.}
    \label{fig:excessP}
\end{figure}

\begin{lemma}
For any path in a flow graph (DAG), adding an edge $(u,v)$ to its start or its end, reduces its excess flow by
$f_{in}(v) - f(u,v)$, or $f_{out}(u)-f(u,v)$, respectively.
\label{lem:addEdge}
\end{lemma}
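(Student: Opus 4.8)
The plan is to reduce everything to the alternative edge/vertex expression for the excess flow recorded in \Cref{obs:altCriteria}, which writes $f_P = \sum_{i=1}^{k-1} f(u_i,u_{i+1}) - \sum_{i=2}^{k-1} f_{in}(u_i)$ (converging form) and, symmetrically, $f_P = \sum_{i=1}^{k-1} f(u_i,u_{i+1}) - \sum_{i=2}^{k-1} f_{out}(u_i)$ (diverging form). The crucial structural observation is that prepending an edge to $P$ turns the old first vertex into an \emph{internal} vertex of the extended path, and appending an edge turns the old last vertex into an internal vertex. Hence the only changes to the formula are (i) one extra edge-flow term for the added edge, and (ii) one extra internal-vertex term for the newly internalized endpoint. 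Comparing the formula before and after the extension then yields the claimed difference directly.

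For the start case, write $P=(u_1,\dots,u_k)$ and let the added edge be $(u,v)$ with $v=u_1$, so the extended path is $P'=(u,u_1,\dots,u_k)$. I would apply the converging form to $P'$: its edge set is that of $P$ together with $(u,v)$, contributing an extra $+f(u,v)$, while its set of internal vertices is that of $P$ together with $v=u_1$, contributing an extra $-f_{in}(v)$. This gives $f_{P'} = f_P + f(u,v) - f_{in}(v)$, i.e.\ the excess flow drops by $f_{in}(v)-f(u,v)$, exactly as claimed.

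The end case is perfectly symmetric and I would handle it with the diverging form to keep the bookkeeping clean. Writing the added edge as $(u,v)$ with $u=u_k$ and $P''=(u_1,\dots,u_k,v)$, the extra edge contributes $+f(u,v)$ and the newly internalized vertex $u=u_k$ contributes $-f_{out}(u)$, so $f_{P''} = f_P + f(u,v) - f_{out}(u)$, a reduction of $f_{out}(u)-f(u,v)$.

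There is no genuine difficulty beyond careful accounting: the only thing to watch is the set of internal vertices, since the entire effect comes from the endpoint of $P$ becoming internal after the extension (the edge terms are purely local). The argument also covers the base case where $P$ is a single edge, for which the internal-vertex sum of $P$ is empty and no special treatment is needed. Selecting the converging form for the start extension and the diverging form for the end extension is merely a convenience that avoids invoking flow conservation explicitly, though either form would work through the equivalence of the two formulations noted after \Cref{def:dispConv}.
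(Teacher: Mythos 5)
Your proof is correct and follows essentially the same route as the paper: it applies the converging formulation (the $f_{in}$ form of \Cref{obs:altCriteria}) to the path extended at its start and the diverging formulation (the $f_{out}$ form) to the path extended at its end, with the key observation that the old endpoint becomes an internal vertex. The paper's proof is just a terser statement of this same accounting, so your write-up simply makes the bookkeeping explicit.
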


\begin{proof}
Using the converging formulation in \Cref{obs:altCriteria} adding an edge at the start of a path modifies its excess flow by $f(u,v)-f_{in}(v)$. Similarly, using the diverging formulation in \Cref{obs:altCriteria} adding an edge at the end of a path modifies its excess flow by $f(u,v)-f_{out}(u)$.
\end{proof}

\section{Simple Verification and Enumeration Algorithms} \label{verification_and_enumeration}
The characterization of a safe path in a flow graph can be directly adapted to simple algorithms for verification and enumeration of safe paths.

\subsection{Verification Algorithm}
The characterization  (\Cref{thm:flowSafety}) can be directly adapted to verify the safety of a path optimally. We preprocess the graph to compute the incoming flow $f_{in}(u)$ and outgoing flow $f_{out}(u)$ for each vertex $u$ in total $O(m)$ time. Using \Cref{obs:altCriteria} the time taken to verify the safety of any path $P$ is $O(|P|)=O(n)$, resulting in following theorem.

\safePathVerification*

\subsection{Enumeration of All Maximal Safe Paths}  
The maximal safe paths can be reported in $O(mn)$ time by  computing a candidate decomposition of the flow into paths, and verifying the safety of its subpaths using the characterization and a scan with the commonly used two-pointer approach.

The candidate flow decomposition can be computed in $O(mn)$ time using the classical flow decomposition algorithm~\cite{FordF10} resulting in $O(m)$ paths ${\cal P}_f$ each of $O(n)$ length. Now, we compute the maximal safe paths along each path $P\in {\cal P}_f$ by a two-pointer scan as follows. We start with the subpath containing the first two edges of the path $P$. We compute its excess flow $f$, and if $f>0$ we append the next edge to the path  on the right and incrementally compute its excess flow by \Cref{lem:addEdge}. Otherwise, if $f\leq 0$ we remove the first edge of the path from the left and incrementally compute the excess flow similarly by \Cref{lem:addEdge} (removing an edge $(u,v)$ would conversely modify the flow by $f_{in}(v)-f(u,v)$). We stop when the end of $P$ is reached with a positive excess flow.

The excess flow can be updated in $O(1)$ time when adding an edge to the subpath on the right or removing an edge from the left. 
If the excess flow of a subpath $P'$ is positive and on appending it with the next edge it ceases to be positive, we report $P'$ as a maximal safe path by reporting only its two indices on the path $P$. Thus, given a path of length $O(n)$, all its maximal safe paths can be reported in $O(n)$ time, and hence require total $O(mn)$ time for the $O(m)$ paths in the flow decomposition ${\cal P}_f$, resulting in the following theorem. 

\safePathEnumeration*

\paragraph*{Concise representation} The solution can be reported using a concise representation (referred as ${\cal P}_c$) having a set of paths as follows. We add to ${\cal P}_c$ every subpath of each path $P\in {\cal P}_f$  that contains maximal safe paths, along with the indices of the solution on the path. 
Thus, for one or more overlapping maximal safe subpaths from  $P$ we add a single path in ${\cal P}_c$ which is the union of all such maximal safe paths, making the paths added to ${\cal P}_c$ of {\em minimal} length. Finally, we also remove the duplicates and prefixes/suffixes among the maximal safe subpaths reported from different paths in ${\cal P}_f$ using an Aho Corasick Trie~\cite{AhoC75}, making the set of paths in ${\cal P}_c$ {\em minimal}. Thus, we define ${\cal P}_c$ as follows.

\begin{definition}[Concise representation ${\cal P}_c$] A minimal set of paths having a minimal length such that every safe path of the flow network is a subpath of some path in the set.
\end{definition}

\remark{In the worst case, the algorithm is optimal for DAGs having  $|{\cal P}_c|=|{\cal P}_f|=\Omega(mn)$, but in general $|{\cal P}_c|$ can be as small as $O(m+n)$ (see examples in \Cref{apn:wcSimple}). Thus, improving this bound requires us to not use a flow decomposition (and hence a candidate solution).}

\section{Experimental Evaluation}
We now evaluate the performance of our safe and complete algorithm by comparing it with the most promising algorithms for flow decomposition. 
Since the performance of various algorithms closely depend on the input graphs, we consider some practically relevant datasets to evaluate their true impact. As the most significant application of flow decomposition derives from RNA assembly, we consider the flow networks extracted as 
splice graphs of simulated RNA-Seq experiments. That is, starting from a set of RNA transcripts, we simulate their expression levels and superimpose the transcripts to create a flow graph. Evaluating our approach in such \emph{perfect} scenario allows us to remove the biases introduced by real RNA-Seq experiments~\cite{srivastava2020alignment} and focus the features offered by the each technique instead. Further, the performance of algorithms also closely depend on the complexity $k$ of a graph, that we measure as the number of paths in the ground truth decomposition of the graph. Thus, we present our results with regards to the complexity $k$ of the input graph instances. 

We first investigate the practical significance of {\em safety} by comparing our safe solution to a popularly used flow decomposition heuristic that is also scalable. The greedy-width~\cite{vatinlen2008simple} heuristic decomposes the flow by sequentially selecting the heaviest possible path, resulting in a simple algorithm that is both scalable and performs well in practice. However, any flow decomposition algorithm may not always report the ground truth paths that originally built the instance of the flow graph. Thus, it is important to measure the reported solution using a {\em precision} metric which evaluates the correctness of the solution. We thus investigate how the precision of greedy-width varies  particularly as the value of $k$ increases.

We then investigate the practical significance of {\em completeness} as reported by our solution, over the previously known safe solutions as reported by unitigs and extended unitigs (recall \Cref{prelim}). 
Note that every safe solution would always score perfectly in a precision metric by definition. Hence, all safe solutions would always outperform greedy-width (or any flow decomposition algorithm) in precision metrics. 
However, this perfect precision comes at the cost of the amount of the solution that is reported. 
Intuitively, this can be measured using some {\em coverage} metrics which describe how much of the ground truth sequence is included in the reported paths. Note that any flow decomposition algorithm will perform better than any safe algorithm by definition, as the safe paths are always subpaths of the paths reported by any flow decomposition algorithm.  Further, the extended unitigs would clearly outperform unitigs, and our safe paths would clearly outperform both unitigs and extended unitigs. We thus investigate how the coverage of various algorithms varies with respect to the greedy-width particularly as the value of $k$ increases. 

Finally, to understand the overall impact of different algorithms, where safe algorithms as compared to greedy-width clearly outperform in precision metrics and underperform in coverage metrics, we address both coverage and precision measures using a single metric, i.e., F-score. We thus investigate the variation in F-score over different values of $k$. In addition, to understand the practical utility of the algorithms we also investigate their performance measures in terms of running time and space requirements. 

\subsection{Datasets}

We consider two RNA transcripts datasets, generated based on approach of Shao et al. \cite{shao2017theory}. They created ``perfect" flow graphs where the true set of transcripts and abundances is always a flow decomposition of the graph (which also means that the graphs will satisfy conservation of flow). They start with a ground truth set of transcripts and abundances and create the input instances by superimposing them into a single graph, adding a single source
$s$ (and sink $t$) with an edge to the beginning (and end) of each transcript. 

\paragraph*{Funnel instances:}
As described in \cref{prelim}, in funnels~\cite{millani2020efficient} all paths are safe. This means that for any flow decomposition algorithm (including greedy-width) and most safe algorithms (including extended unitigs and our safe and complete algorithm), the resulting paths always achieve the perfect value of coverage, precision, and F-score on funnel instances. As a result, they dilute the relative measures of the different algorithms. Previously, Kloster et al.~ \cite[Lemma 8]{kloster2018practical} described a contraction of graphs that transforms funnels to trivial instances ($k=1$), however they excluded only single path instances from their evaluation. We found (see \Cref{fig:datasets}) that many complex instances (with larger $k$) are also funnels making them trivial. Hence, we removed such instances from our evaluation for a more accurate presentation of our results. Since  the previous studies~\cite{shao2017theory,kloster2018practical,williams2021flow} have considered the complete datasets including the trivial instances, we also include the evaluation on the complete datasets in \Cref{ap:funnels} for the sake of completeness. 

\begin{figure}[t]
     \centering
     \begin{subfigure}[b]{0.3\textwidth}
         \centering
         \includegraphics[width=0.9\textwidth]{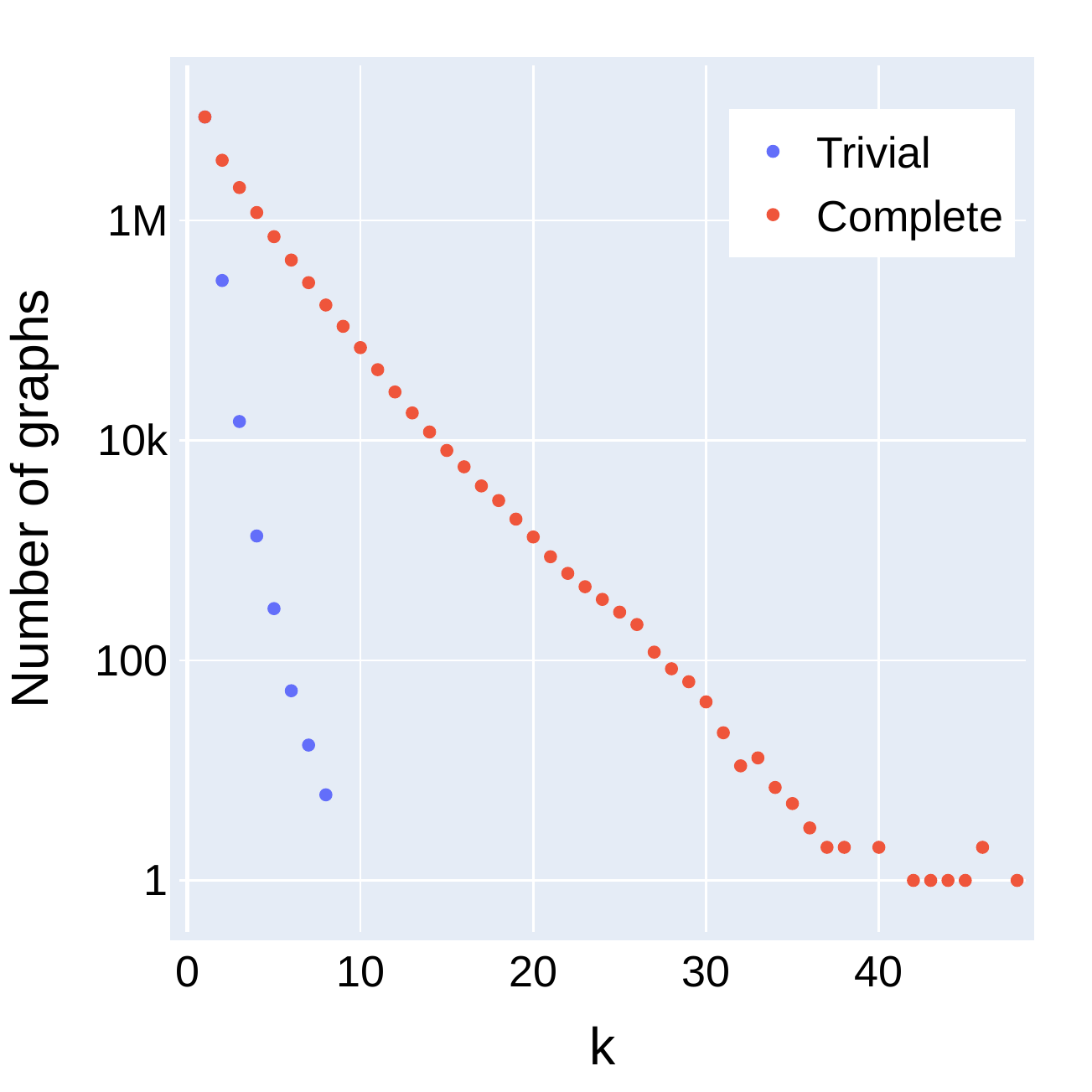}
         \caption{Catfish dataset}
        \label{fig:dataset1a}
     \end{subfigure}
    \begin{subfigure}[b]{0.3\textwidth}
         \centering
         \includegraphics[width=0.9\textwidth]{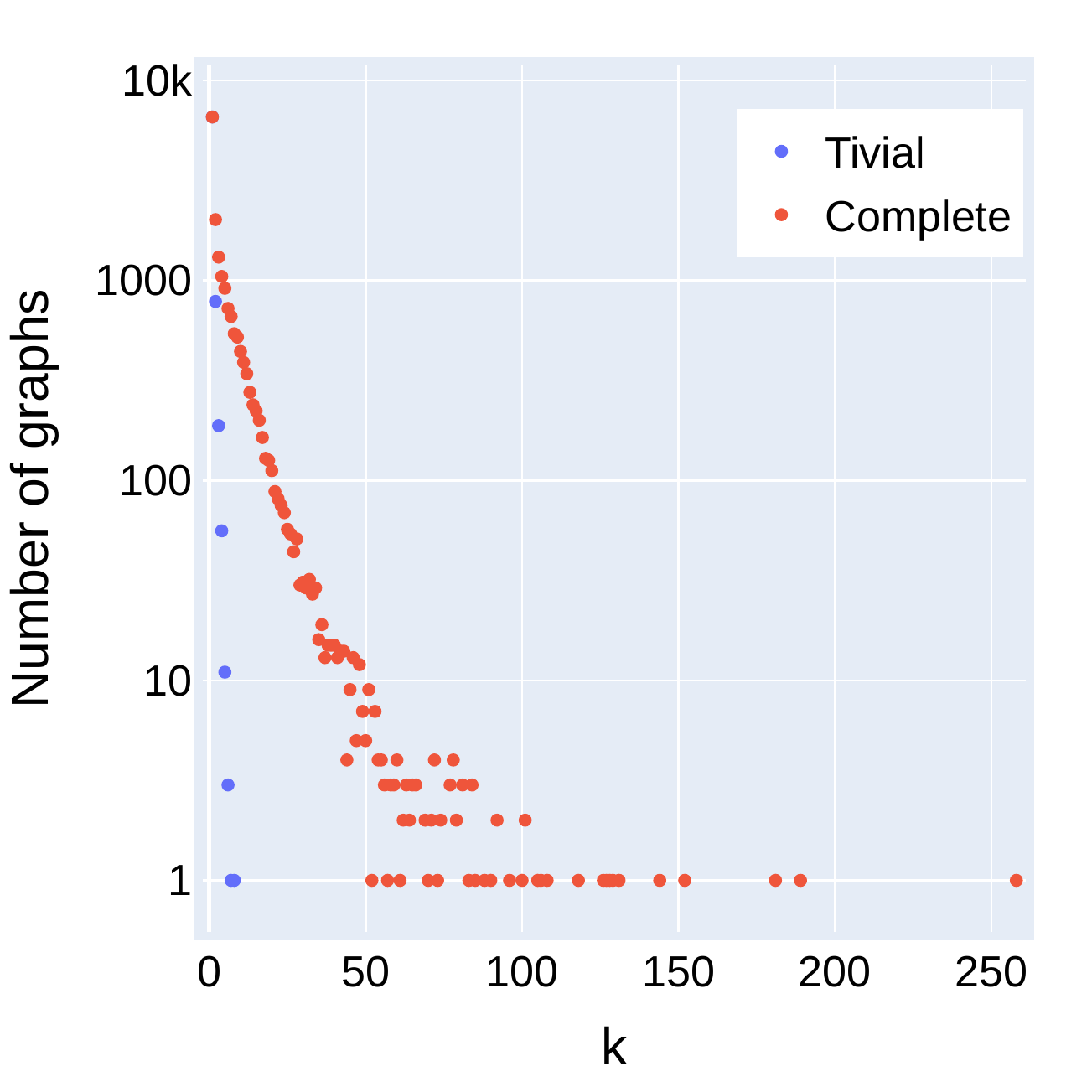}
         \caption{Reference-Sim dataset}
        \label{fig:dataset1b}
     \end{subfigure}
\caption{Distribution of graphs in the datasets by its complexity $k$ with respect to the trivial instances (funnels).}
\label{fig:datasets}
\end{figure}

\paragraph*{Catfish dataset:}
We consider the dataset first used by Shao and Kingsford~\cite{shao2017theory}, which includes 100 simulated human transcriptomes for human, mouse, and zebrafish using Flux-Simulator~\cite{griebel2012modelling}. Additionally, it includes 1,000
experiments from the Sequence Read Archive, with simulated abundances for transcripts using Salmon~\cite{patro2015salmon}.
In both cases,  the weighted transcripts are  superimposed to build splice graphs as described above. This dataset has also been used in other flow decomposition benchmarking studies~\cite{kloster2018practical,williams2021flow}. 
There are 17,335,407 graphs in total in this dataset, of which 8,301,682 are non-trivial (47.89\%). The logscale distribution of the complete dataset (and its funnels) by $k$ is shown in \Cref{fig:dataset1a}. 
However, in this dataset the details about the number of bases on each node (exons or pseudo-exons) are omitted, which results in an incomplete measure of coverage and precision. Moreover, this dataset has negligible complex graph instances (having large $k$). Hence, we do not include its evaluation in the main paper, rather defer it to \Cref{ap:catfish} for the sake of completeness.

\paragraph*{Reference-Sim dataset:} We consider a dataset~\cite{lucia_williams_2021_5646910} containing a single simulated transcriptome as follows. For each transcript on the positive strand in the GRCh.104 \emph{homo sapiens} reference genome, it samples a value from the lognormal distribution with mean and variance both equal to $-4$, as done in the default settings of RNASeqReadSimulator~\cite{li2014rnaseqreadsimulator}. It then multiplies the resulting values by 1000 and round to the nearest integer. Then it excludes any transcript that is rounded to 0. There are 17,941 total graphs in this dataset, of which 10,323 are non-trivial (57.54\%). The logscale distribution of the complete dataset (and its funnels) by $k$ is shown in \Cref{fig:dataset1b}. In this dataset, we also have access to the genomic coordinates (and hence number of bases) represented by nodes, allowing us to compute more practically relevant coverage and precision metrics.

\subsection{Evaluation Metrics}

All metrics are computed in terms of  bases for the Reference-Sim
dataset. However, since the Catfish dataset omits the base information its metrics are computed in terms of exons or pseudo-exons (vertices in
the flow graph). For every algorithm, $R$ denotes a reported path (for Catfish) or a reported safe subpath (for unitigs, extended unitigs, and safe complete) of the solution. In addition, $T$ denotes a path in the set of ground truth transcripts provided in the dataset. For each graph, we compute the following metrics which were also used earlier by~\cite{caceres2021safety} for safety in constrained path covers:
\setlist[description]{font=\normalfont\itshape\space}
\begin{description}
\item[Weighted precision:] We classify a reported path $R$ as \emph{correct} if $R$ is a subpath of some ground truth transcript $T$ of the flow graph. Weighted precision is the total length of correctly reported paths divided by the total length of reported paths. The commonly used precision metric~\cite{pertea2015stringtie,shao2017accurate} for measuring the accuracy of RNA assembly methods considers only those paths as correct which are (almost) exactly contained in the ground truth decomposition. Further, the precision is computed as the number of correctly reported paths divided by the total reported paths. However, since all the safe algorithms reports (possibly) partial transcripts, we use subpaths instead of (almost) exactly same paths. To highlight {\em how much} is reported correctly instead of {\em how many}, we use weighted precision to give a better score for longer correctly reported paths. 
    \item[Maximum relative coverage:] Given a ground truth transcript $T$ and a reported path $R$, we define a \emph{segment} of $R$ inside $T$ as a maximal subpath of $R$ that is also subpath of $T$.
  We define the maximum
    relative coverage of a ground truth transcript as the length of the longest segment of a reported path inside $T$, divided by the length of $T$. The corresponding value for the entire graph is the average of the values over all $T$. While it is common in the literature~\cite{pertea2015stringtie,shao2017accurate} to report \emph{sensitivity} (the proportion of ground truth transcripts that are correctly predicted), we measure correctness based
    on coverage since all the safe algorithms report paths that (possibly) do not cover an entire transcript. 
    \item[F-score:] The standard measure to combine precision and sensitivity is using an \emph{F-score}, which is the harmonic mean of the two. In our evaluation we correspondingly use the weighted precision and the maximum relative coverage for computing the F-score.

\end{description}

\subsection{Implementation and Environment Details}
We evaluate the following algorithms in our experiments.
\begin{description}
\item[Unitigs:] It computes the unitigs, by considering each unvisited edge in the topological order and extending it towards the right as long as the internal nodes have unit indegree and unit outdegree. The result ignores single edges.
\item[ExtUnitigs:] It computes the extended unitigs, by considering each unitig including single edges, and extending it towards the left as long as the internal nodes have unit indegree, and towards the right as long as internal nodes have unit outdegree.   
\item[Safe\&Comp:] It computes the safe and complete paths for flow decomposition using our enumeration algorithm described in \Cref{verification_and_enumeration}. However, since the metrics evaluation scripts uses  each safe path individually (as reported by other algorithms), we output all safe paths separately instead of using  ${\cal P}_c$. This increases the size of output and hence time complexity to $O(mn^2)$ from $O(mn)$ as stated in \Cref{thm:flowSafety}. 
\item[Greedy:] It computes the greedy-width heuristic using Catfish~\cite{shao2017theory} with the \texttt{-a greedy} parameter. 
\end{description}

All algorithms are implemented in C++, whereas the scripts for evaluating metrics are implemented in Python. The Unitigs, ExtUnitigs, and Safe\&Comp implementations use optimization level 3 of GNU C++ (compiled with $-O3$ flag), whereas the Greedy uses the optimizations of the Catfish pipeline. The Unitigs, ExtUnitigs, and Safe\&Comp  additionally require a post processing step using Aho Corasick Trie~\cite{AhoC75} for removing duplicates, and prefix/suffixes to make the set of safe paths minimal. However, the time and memory requirements are evaluated considering only the algorithm, and not post processing and metric evaluations which are not optimized. All performances were evaluated on a laptop using a single core (i5-8265U CPU \@ 1.60GHZ) having 15.3GB memory. The source code of our project is available on Github~\footnote{https://github.com/algbio/flow-decomposition-safety} under GNU Genral Public License v3 license. 

\subsection{Results} \label{sec:results}

We first evaluate the significance of {\em safety} among the reported solution. \Cref{fig:dataset_2_precision} compares the weighted precision of all the algorithms on the Reference-Sim dataset distributed over $k$. All the safe algorithms clearly report perfect precision as expected. However, the precision of the Greedy algorithm sharply declines with the increase in $k$, almost linearly to $30\%$ for $k=35$. This may be explained by the sharp increase in the number of possible paths in graphs with increase in $k$, which can be used by any flow decomposition algorithm. Hence, the significance of safety becomes very prominent as $k$ increases .

\begin{figure}[t]
     \centering
     \begin{subfigure}[b]{0.32\textwidth}
         \centering
         \includegraphics[trim=30 25 75 90,clip,  width=\textwidth]{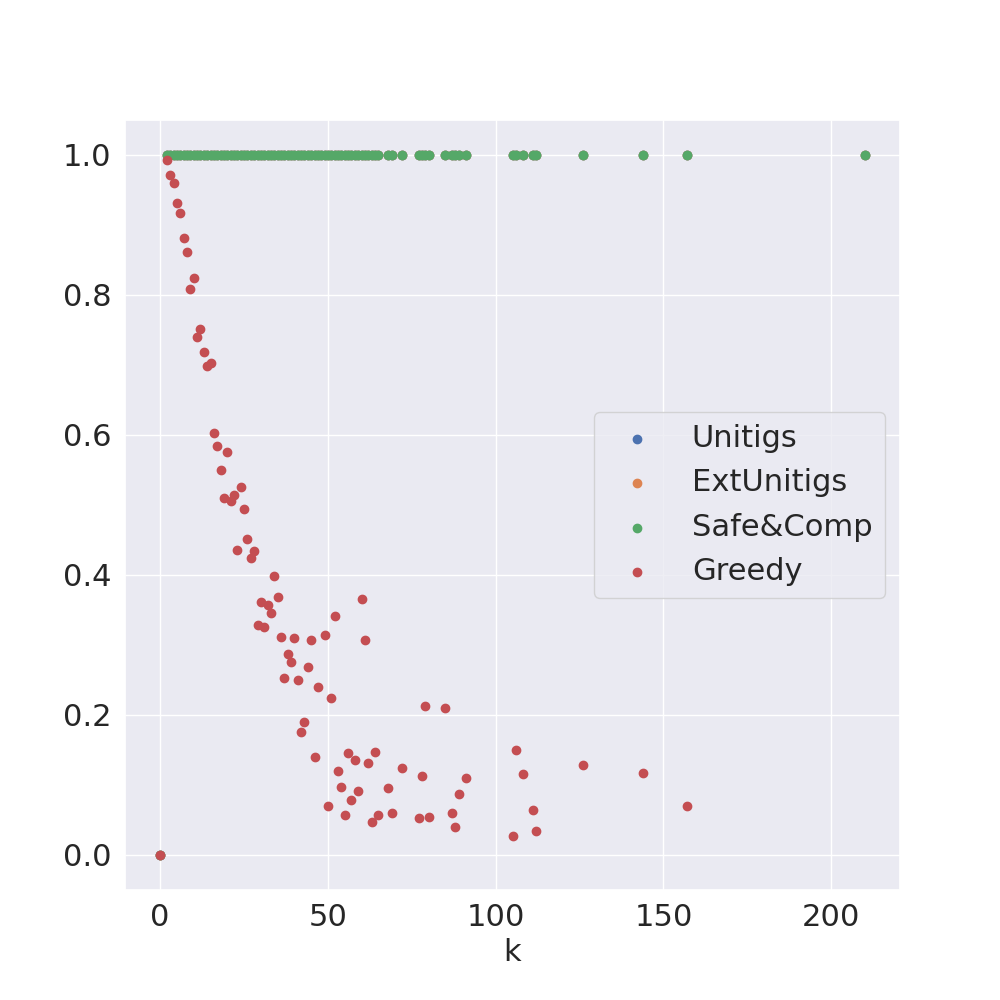}
         \caption{Weighted Precision}
          \label{fig:dataset_2_precision}
     \end{subfigure}
     \begin{subfigure}[b]{0.32\textwidth}
         \centering
         \includegraphics[trim=30 25 75 90,clip,  width=\textwidth]{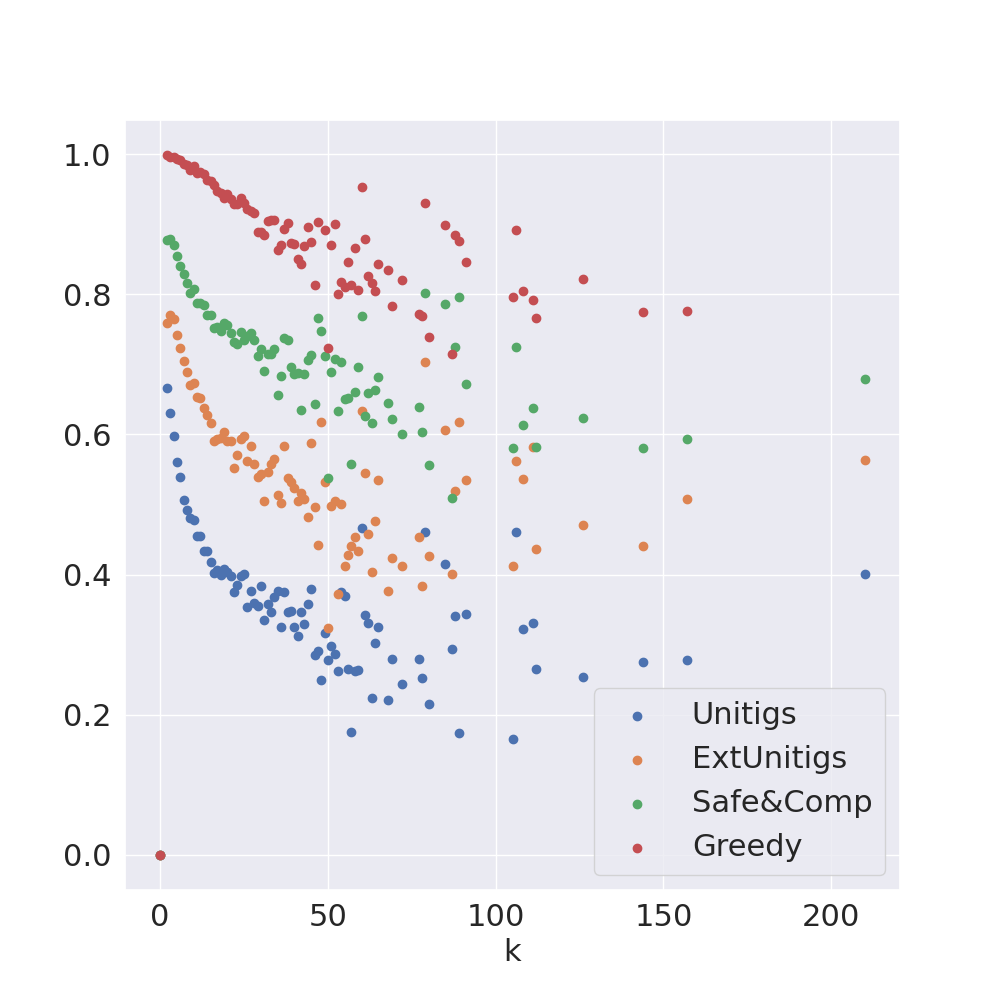}
         \caption{Maximum Relative Coverage}
         \label{fig:dataset_2_coverage}
     \end{subfigure}
    \begin{subfigure}[b]{0.32\textwidth}
         \centering
         \includegraphics[trim=30 25 75 90,clip,  width=\textwidth]{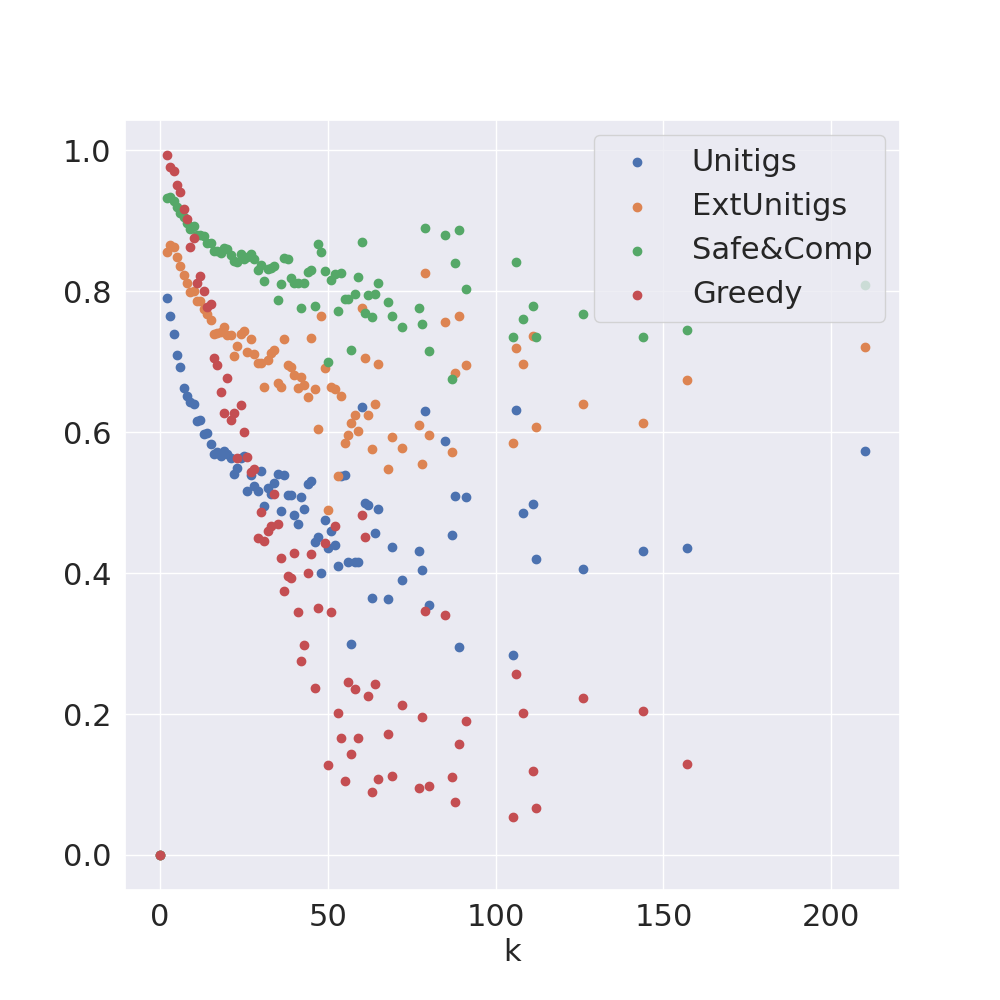}
         \caption{F-Score}
         \label{fig:dataset_2_fscore}
     \end{subfigure}
\caption{Evaluation metrics on graphs distributed by $k$ for the Reference-Sim dataset.}

        \label{fig:dataset_2_cov}
\end{figure}

We then evaluate the significance of {\em completeness} of the safe algorithms.  
\Cref{fig:dataset_2_coverage} compares the maximum relative coverage of all the algorithms on the Reference-Sim dataset distributed over $k$. As expected, Greedy outperforms all the other, followed by Safe\&Comp, ExtUnitigs and Unitigs. However, note that as $k$ reaches $20$ Safe\&Comp, ExtUnitigs and Unitigs sharply fall to $75\%$, $60\%$ and $40\%$,  while Greedy maintains around $95\%$ coverage. Overall, Safe\&Comp is almost always $\approx85-90$\% of that of Greedy, whereas ExtUnitigs and Unitigs falls to $60\%$ and $40\%$ respectively. Hence, the Safe\&Comp manages to maintain perfect precision without losing a lot on coverage, demonstrating the importance of {\em completeness} among the safe algorithms. 

\Cref{fig:dataset_2_fscore} supports the above inference by evaluating  the combined metric F-Score, where Safe\&Comp dominates Unitigs and ExtUnitigs by definition. Safe\&Comp also dominates Greedy as $k$ approaches $10$. It is also important to note that both ExtUnitigs and Unitigs eventually dominate Greedy for a slightly larger value of $k>20$ and $k>30$, respectively. This shows the significance of considering Safe algorithms for complex graphs. However, the significance of the Safe\&Comp is apparent from the \Cref{fig:dataset1b} as the number of graphs with such higher complexities also reduces sharply.

Hence, we evaluate a summary of the above results averaged over all graphs irrespective of $k$. \Cref{tab:annotated_noFunnels} summarizes the evaluation metrics for all the algorithms for simple graphs ($k<10$) and complex graphs ($k>10$), and both. While on the simpler graphs Greedy dominates Safe\&Comp mildly ($\approx 3\%$), for complex graphs it is dominated significantly ($\approx 20\%$) by Safe\&Comp and appreciably ($\approx 8\%$) by ExtUnitigs.  However, despite the larger ratio of simpler graphs, the collective F-score over all graphs is still ($\approx 4\%$) better for Safe\&Comp over Greedy which signifies the applicability of Safe\&Comp over Greedy.

\begin{table}[t]
\centering
\begin{tabular}{|c|c|c|c|c|c|c|}
\hline 
Graphs & Algorithm & Max. Coverage & Wt. Precision &  F-Score\\ \hline 
\multirow{3}{*}{\parbox[c]{2cm}{\begin{center}$k\geq 2$\\(100\%)\end{center}}} 
& Unitigs & 0.51 & 1.00 & 0.66\\
& ExtUnitigs & 0.69 & 1.00 & 0.81\\
& Safe\&Comp & 0.82 & 1.00 & 0.90\\
& Greedy & 0.98 & 0.81 & 0.86\\ \hline
\multirow{3}{*}{\parbox[c]{2cm}{\begin{center}$2\le k \le 10$\\(68\%)\end{center}}} & Unitigs & 0.55 & 1.00 & 0.70\\
& ExtUnitigs & 0.73 & 1.00 & 0.84\\
& Safe\&Comp & 0.84 & 1.00 & 0.91\\
& Greedy & 0.99 & 0.91 & 0.94\\
\hline
\multirow{3}{*}{\parbox[c]{2cm}{\begin{center}$k>10$\\(32\%)\end{center}}}  & Unitigs & 0.41 & 1.00 & 0.58\\
& ExtUnitigs & 0.61 & 1.00 & 0.75\\
& Safe\&Comp & 0.76 & 1.00 & 0.86\\
& Greedy & 0.95 & 0.60 & 0.69\\
\hline
\end{tabular}
\caption{Summary of evaluation metrics for the Reference-Sim dataset.}
\label{tab:annotated_noFunnels}
\end{table}

\begin{table}[t]
\centering
\begin{tabular}{|c|c|c|c|c|c|c|c|c|c|c|}
\hline 
\multirow{4}{*}{Algorithm} & \multicolumn{2}{|c|}{Reference-Sim} & \multicolumn{8}{|c|}{Catfish} \\ \cline{2-11} 
 & \multicolumn{2}{|c|}{Human} & \multicolumn{2}{|c|}{Zebrafish} & \multicolumn{2}{|c|}{Mouse} & \multicolumn{2}{|c|}{Human} & \multicolumn{2}{|c|}{Human (salmon)} \\ 
& \multicolumn{2}{|c|}{25.6MB} & \multicolumn{2}{|c|}{122MB} & \multicolumn{2}{|c|}{137MB} & \multicolumn{2}{|c|}{157MB} & \multicolumn{2}{|c|}{2.5GB} \\ 
\cline{2-11} 
 & Time & Mem & Time & Mem & Time & Mem & Time & Mem & Time & Mem \\ 
 \hline
Unitigs & 0.68s & 3.58MB & 13.82s & 3.51MB & 15.62s & 3.53MB & 18.22s & 3.54MB & 303.72s & 3.66MB \\
ExtUnitigs & 0.99s & 3.63MB & 18.31s & 3.52MB & 20.87s & 3.57MB & 23.64s & 3.56MB & 404.50s & 3.68MB \\
Safe\&Comp & 2.56s & 4.47MB & 20.17s & 3.56MB & 25.76s & 3.66MB & 28.59s & 3.54MB & 667.27s & 3.84MB \\
Greedy & 7.71s & 4.88MB & 108.30s & 6.00MB & 127.38s & 6.29MB & 148.46s & 6.34MB & 2684.30s & 8.47MB \\
\hline
\end{tabular}
\caption{Time and Memory requirements of the different algorithms for the evaluated datasets.}
\label{tab:timeMem}
\end{table}

Finally, we evaluate the applicability of the above algorithms in practice, by comparing their running time and peak memory requirements. Since all the algorithms are implemented in the same language (C++) and evaluated on the same machine, it is reasonable to directly compare these measures. 
In \cref{tab:timeMem}, we see that Unitigs clearly are the fastest, where ExtUnitigs takes roughly $1.3-1.5\times$  time. Safe\&Comp takes upto roughly $1.2-2.5\times$ time than ExtUnitigs, and Greedy requires roughly $3-5\times$ time than Safe\&Comp. The peak memory requirements of the safe algorithms are very close (within 5\%-25\%), whereas Greedy requires roughly $1.1-2.2\times$ more memory than Safe\&Comp. Overall, for the performance measures Safe\&Comp shows a significant improvement over Greedy, without losing a lot over the trivial algorithms.

\section{Conclusion}
We study the flow decomposition problem in DAGs under the Safe and Complete paradigm, which has applications in various domains, including the more prominent multi-assembly of biological sequences. Previous work characterized such paths (and their generalizations) using a global criterion. Instead, we present a simpler characterization based on a more efficiently computable local criterion, which is directly adapted into an optimal verification algorithm, and a simple enumeration algorithm. Intuitively, it is a {\em weighted} adaptation of {\em extended unitigs} which is a  prominent approach for computing safe paths. 

Through our experiments, we show that the safe and complete paths found by our algorithm outperform the popularly used greedy-width heuristic for RNA assembly instances with relatively complex graph instances, both on quality (F-score) and performance (running time and memory) parameters. On simple graphs, Greedy outperforms Safe\&Comp and Safe\&Comp outperforms ExtUnitigs mildly ($\approx 3-5\%$). However, on complex graphs, Safe\&Comp outperforms Greedy significantly ($\approx 20\%$) and ExtUnitigs appreciably ($\approx 13\%$). While the Reference-Sim dataset shows the overall dominance of Safe\&Comp since complex graphs are appreciable ($32\%$), Greedy dominates Safe\&Comp in Catfish dataset since complex graphs are negligible ($\approx 2\%$). Another significant reason for the dominance of Greedy over Safe\&Comp on Catfish datasets is the absence of base information on nodes (see \Cref{ap:catfish}). Hence, the importance of Safe\&Comp algorithms increases with the increase in complex graph instances in the dataset, and prominently when we consider information about the genetic information represented by each node. In terms of performance, ExtUnitigs are $1.3-1.5\times$ slower than the fastest approach (Unitigs), while Safe\&Comp further takes roughly $1.2-2.5\times$ time than ExtUnitigs, both requiring equivalent memory. However, Greedy requires roughly $3-5\times$ time and $1.1-2.2\times$ memory than Safe\&Comp. Overall, Safe\&Comp performs  significantly better than Greedy, without losing a lot over the trivial algorithms.

Despite the {\em optimality} of our characterization  of safe and complete paths, the enumeration algorithm is not time optimal. Additionally, the concise representation of the safe paths ${\cal P}_c$ may not be optimal for some graphs as described in \Cref{apn:wcSimple}. Hence, for datasets with more complex graphs there is a scope for improving the current enumeration algorithm and the concise representation in the future. Another interesting direction for an extension of this problem having practical significance is finding safe paths for those flow decompositions whose paths have a certain minimum weight threshold.

\bibliography{main}

\appendix

\section{Tightness and Worst Case for Simple Enumeration Algorithm}
\label{apn:wcSimple}

\begin{figure}[h]
    \centering
\includegraphics[trim=10 0 0 0, clip, scale=.9]{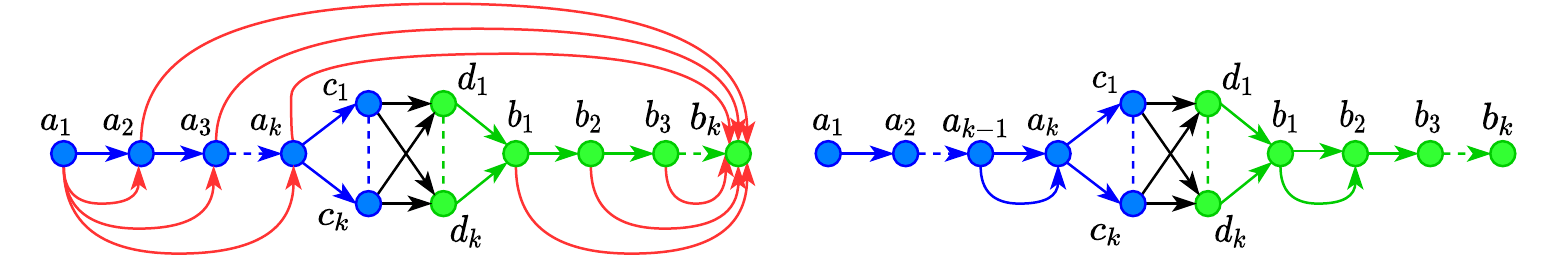}
    \caption{The worst case (left) and best case (right) graphs  for the simple enumeration algorithm.}
    \label{fig:wcSimple}
\end{figure}

The example shown in \Cref{fig:wcSimple} demonstrates the worst case and the best case graphs where the simple enumeration algorithm is optimal, and inefficient, respectively. We have two  paths $A=\{a_1,\cdots, a_k\}$ and $B=\{b_1,\cdots,b_k\}$. The set $C=\{c_1,\cdots, c_k\}$ has edges from $a_k$ and the set $D=\{d_1,\cdots,d_k\}$ has edges to $b_1$.  Choosing $k=n/4$ and any subset of connections between $C\times D$ we get a graph with any $n$ and $m$. Let there be flow $k$ on the {\em black} edges and unit flow on the {\em red} edges. (a) In the worst case graph (left) the flow on remaining edges is according to flow conservation assuming $a_1$ as the source and $b_k$ as the sink. Each edge in $C\times D$ necessarily has a separate path in ${\cal P}_f$ from $a_1$ to $b_k$, with $k$ maximal safe paths between $\{a_i,b_i\}$ for all $1\leq i\leq k$ because every path between $a_i$ to $b_1$ has excess flow $i$. This ensures that $|{\cal P}_c|=|{\cal P}_f|=\Omega(mn)$. (b) In the best case graph (right) the two edges from  $a_{k-1}$ to $a_k$ and from $b_1$ to $b_2$ carry equal flow, and the remaining edges have flow according to conservation of flow.  
    Each edge in $C\times D$ has a safe path of $O(1)$ size from $a_k$ to $b_1$. In addition there are two safe paths each of length $O(n)$ from $a_1$ to $a_k$, and from $b_1$ to $b_k$, corresponding to two parallel edges between $(a_{k-1},a_k)$, and between $(b_1,b_2)$, respectively. However, we have still have $|{\cal P}_f|=\Omega(mn)$ but $|{\cal P}_c|=O(m+n)$.

\section{Experimental Results on the Catfish Dataset} \label{ap:catfish}
In this section we evaluate the corresponding metrics of different algorithms on the Catfish dataset. However, since the Catfish dataset does not maintain the information on bases, the evaluation is  based on nodes  (exons or pseudo-exons) instead of bases.

\begin{figure}
     \centering
     \begin{subfigure}[b]{0.32\textwidth}
         \centering
         \includegraphics[trim=30 25 75 90,clip,  
         width=\textwidth]{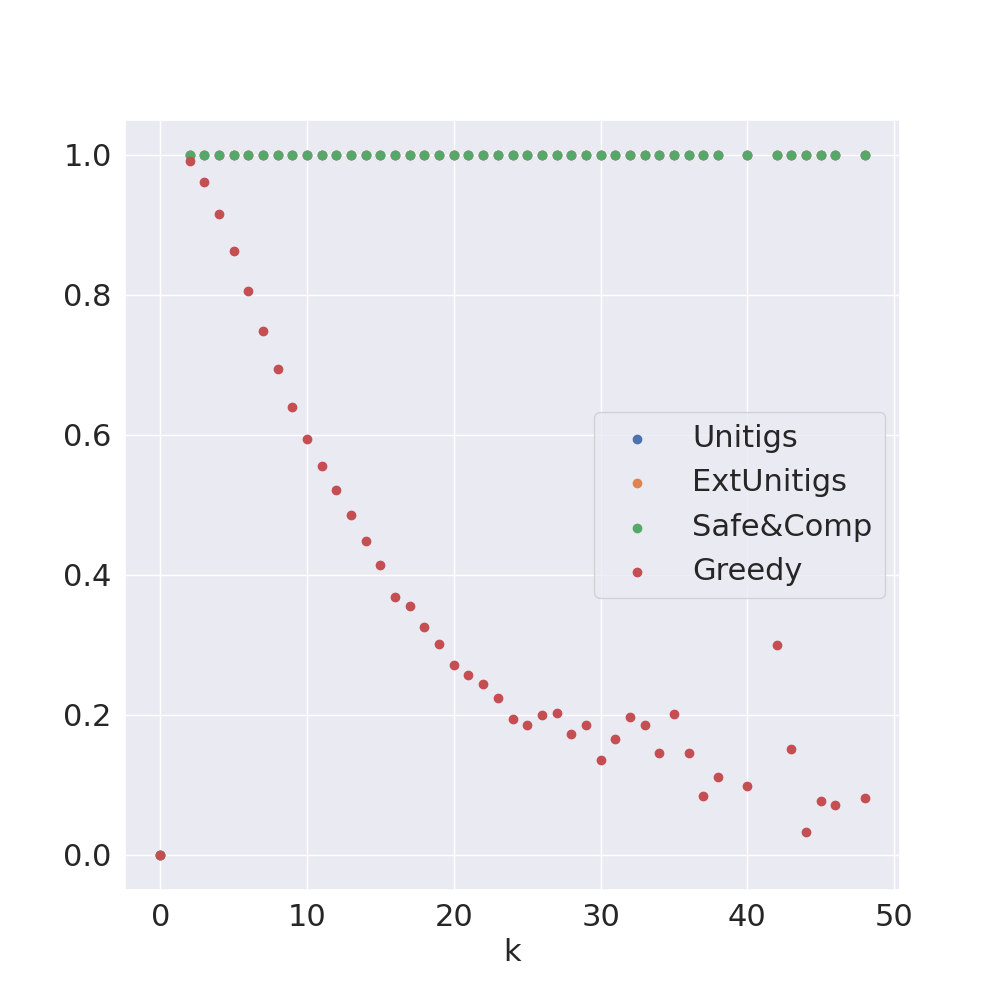}
         \caption{Weighted Precision}
          \label{fig:dataset_1_precision}
     \end{subfigure}
     \begin{subfigure}[b]{0.32\textwidth}
         \centering
         \includegraphics[trim=30 25 75 90,clip,  width=\textwidth]{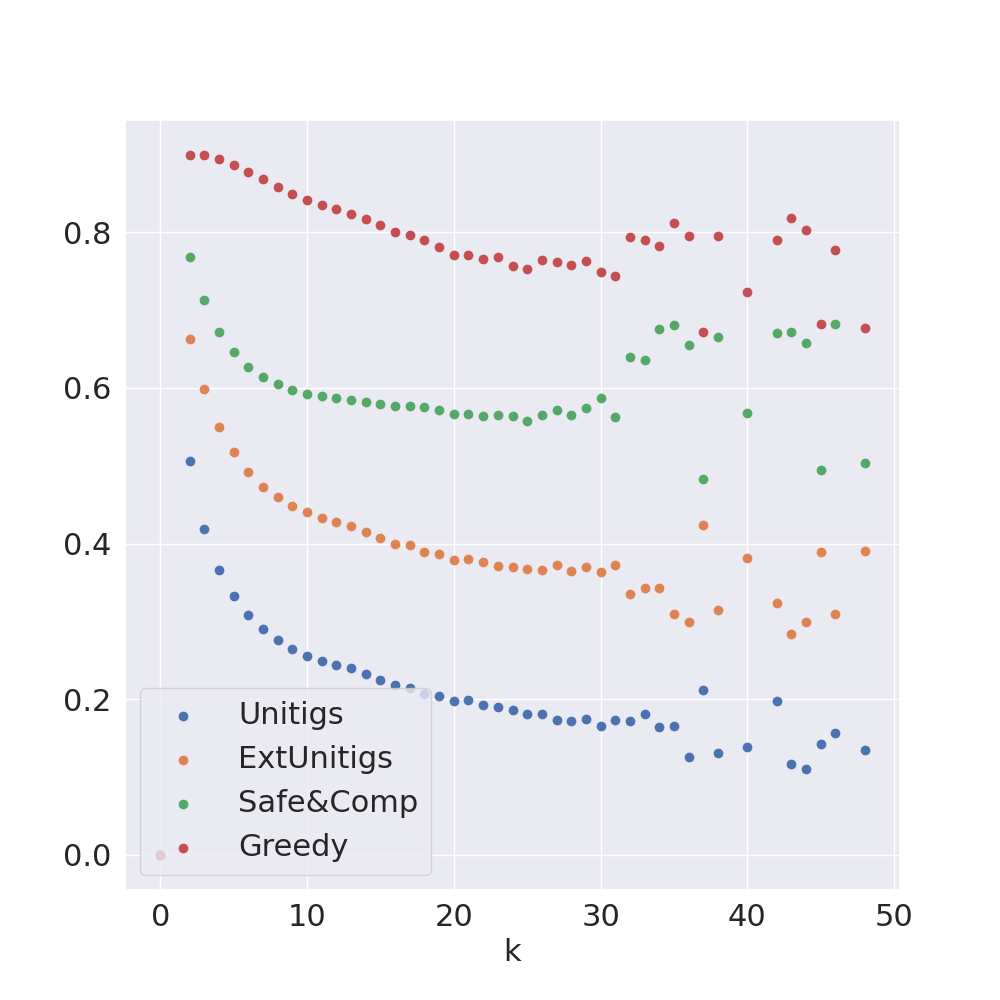}
         \caption{Maximum Relative Coverage}
         \label{fig:dataset_1_coverage}
     \end{subfigure}
    \begin{subfigure}[b]{0.32\textwidth}
         \centering
         \includegraphics[trim=30 25 75 90,clip,  width=\textwidth]{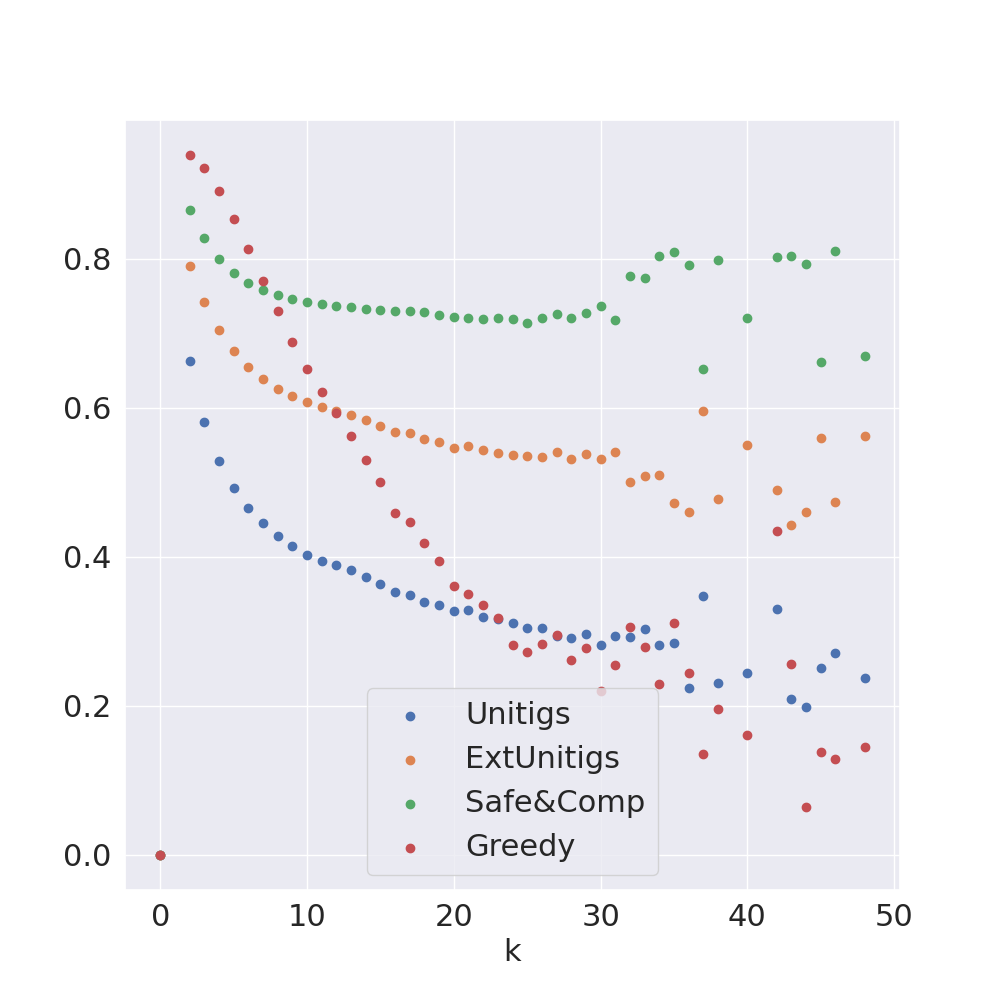}
         \caption{F-Score}
         \label{fig:dataset_1_fscore}
     \end{subfigure}
\caption{Evaluation metrics on graphs distributed by $k$ for the Catfish dataset.}

        \label{fig:dataset_1_cov}
\end{figure}

\begin{table}
\centering
\begin{tabular}{|c|c|c|c|c|c|c|}
\hline 
Graphs & Algorithm & Max. Coverage & Wt. Precision &  F-Score\\ \hline 
\multirow{3}{*}{\parbox[c]{2cm}{\begin{center}$k\geq 2$\\(100\%)\end{center}}} & Unitigs & 0.42 & 1.00 & 0.58\\
& ExtUnitigs & 0.59 & 1.00 & 0.73\\
& Safe\&Comp & 0.71 & 1.00 & 0.82\\
& Greedy & 0.89 & 0.92 & 0.89\\

\hline
\multirow{3}{*}{\parbox[c]{2cm}{\begin{center}$2\le k \le 10$\\(98\%)\end{center}}}  & Unitigs & 0.42 & 1.00 & 0.58\\
& ExtUnitigs & 0.59 & 1.00 & 0.74\\
& Safe\&Comp & 0.71 & 1.00 & 0.83\\
& Greedy & 0.89 & 0.93 & 0.90\\

\hline
\multirow{3}{*}{\parbox[c]{2cm}{\begin{center}$k>10$\\(2\%)\end{center}}} & Unitigs & 0.24 & 1.00 & 0.38\\
& ExtUnitigs & 0.42 & 1.00 & 0.59\\
& Safe\&Comp & 0.58 & 1.00 & 0.74\\
& Greedy & 0.82 & 0.49 & 0.56\\

\hline
\end{tabular}
\caption{Summary of evaluation metrics for the Catfish dataset, computed relative to nodes.}
\label{tab:catfish_nofunnels}
\end{table}

\begin{table}
\centering
\begin{tabular}{|c|c|c|c|c|c|c|}
\hline 
Species & Algorithm & Max. Coverage & Wt. Precision &  F-Score\\ \hline 
\multirow{3}{*}{\parbox[c]{2cm}{\begin{center} Human (Salmon) \end{center}}} & Unitigs & 0.41 & 1.00 & 0.57\\
& ExtUnitigs & 0.58 & 1.00 & 0.73\\
& Safe\&Comp & 0.70 & 1.00 & 0.82\\
& Greedy & 0.89 & 0.91 & 0.89\\

\hline
\multirow{3}{*}{\parbox[c]{2cm}{\begin{center} Human \end{center}}} & Unitigs & 0.44 & 1.00 & 0.60\\
& ExtUnitigs & 0.61 & 1.00 & 0.75\\
& Safe\&Comp & 0.73 & 1.00 & 0.84\\
& Greedy & 0.90 & 0.97 & 0.93\\

\hline
\multirow{3}{*}{\parbox[c]{2cm}{\begin{center} Zebrafish \end{center}}} & Unitigs & 0.49 & 1.00 & 0.65\\
& ExtUnitigs & 0.67 & 1.00 & 0.80\\
& Safe\&Comp & 0.78 & 1.00 & 0.87\\
& Greedy & 0.91 & 0.98 & 0.94\\

\hline
\multirow{3}{*}{\parbox[c]{2cm}{\begin{center} Mouse \end{center}}} & Unitigs & 0.46 & 1.00 & 0.61\\
& ExtUnitigs & 0.63 & 1.00 & 0.77\\
& Safe\&Comp & 0.75 & 1.00 & 0.85\\
& Greedy & 0.90 & 0.97 & 0.93\\

\hline
\end{tabular}
\caption{Summary of evaluation metrics for the Catfish dataset, computed relative to nodes, by species.}
\label{tab:catfish_nofunnels_species}
\end{table}

\begin{remark}
The results on the Catfish dataset do not match the inferences from \cref{sec:results} exactly. The primary differences and the expected reasons for the same are as follows: 
\begin{enumerate}
    \item Base vs. node computations for metrics: When considering the genomic content that is predicted (i.e., bases), Safe\&Comp outperforms Greedy with respect to F-score over all graphs, as seen in \cref{tab:annotated_noFunnels}. However, the opposite is observed when we compute metrics based on nodes, as reported in \cref{tab:human-nodes}. Because the Catfish dataset has no base information, we can only report node information, but it is possible that the same patterns we observe in Reference-Sim with bases would hold for Catfish in terms of bases as well.
    \item Ratio of simpler graphs: Catfish datasets are more skewed toward simpler graphs than the Reference-Sim dataset. \cref{tab:annotated_noFunnels} shows that Reference-Sim has 32\% of graphs with $k>10$, while \cref{tab:catfish_nofunnels} shows that Catfish dataset has only 2\%. Since Greedy outperforms Safe\&Comp on simpler graphs, it is better for overall Catfish Datasets having more simpler graphs. 
\end{enumerate}
\end{remark}

\section{Experimental Results on the Reference-Sim Dataset Considering Nodes} 
\label{ap:nodeRef}
In this section we evaluate the corresponding metrics of different algorithms on the Reference-Sim dataset based on nodes  (exons or pseudo-exons) instead of bases.

\begin{figure}
     \centering
     \begin{subfigure}[b]{0.32\textwidth}
         \centering
         \includegraphics[trim=30 25 75 90,clip,  
         width=\textwidth]{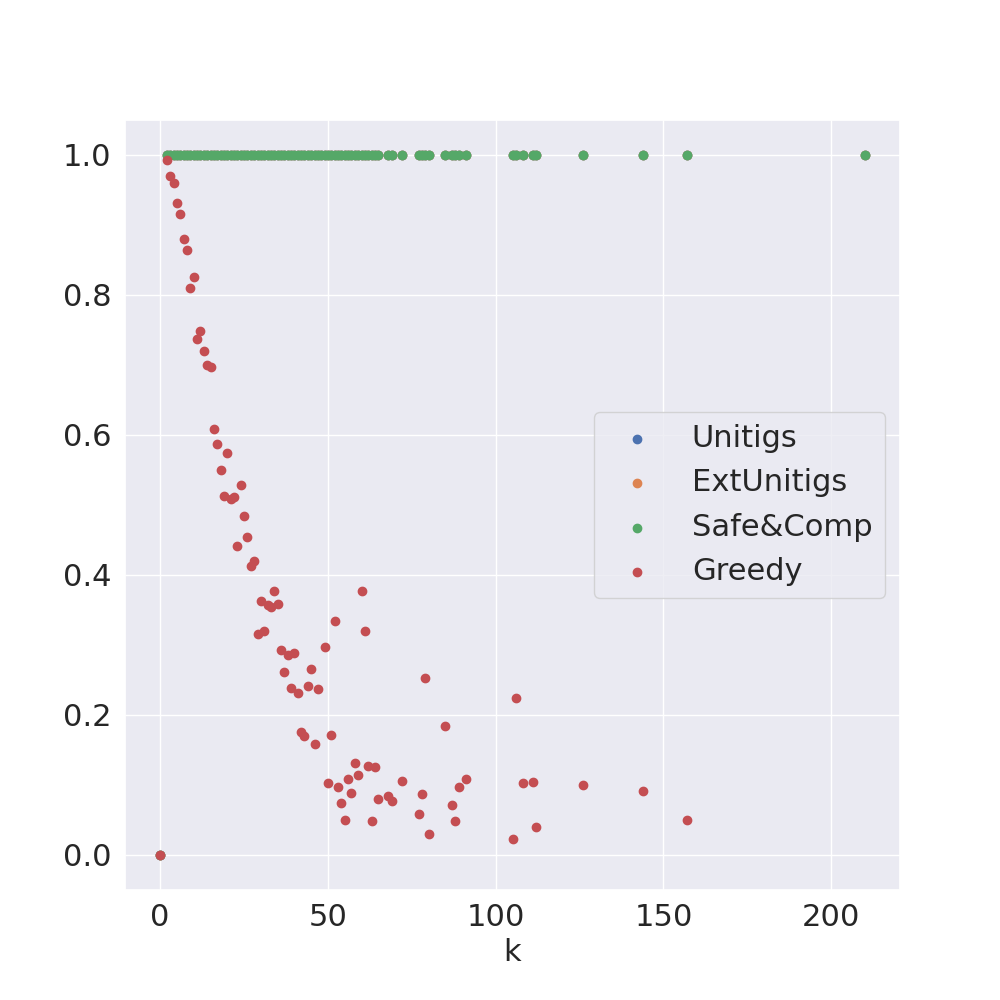}
         \caption{Weighted Precision}
          \label{fig:dataset_1_precision_node}
     \end{subfigure}
     \begin{subfigure}[b]{0.32\textwidth}
         \centering
         \includegraphics[trim=30 25 75 90,clip,  width=\textwidth]{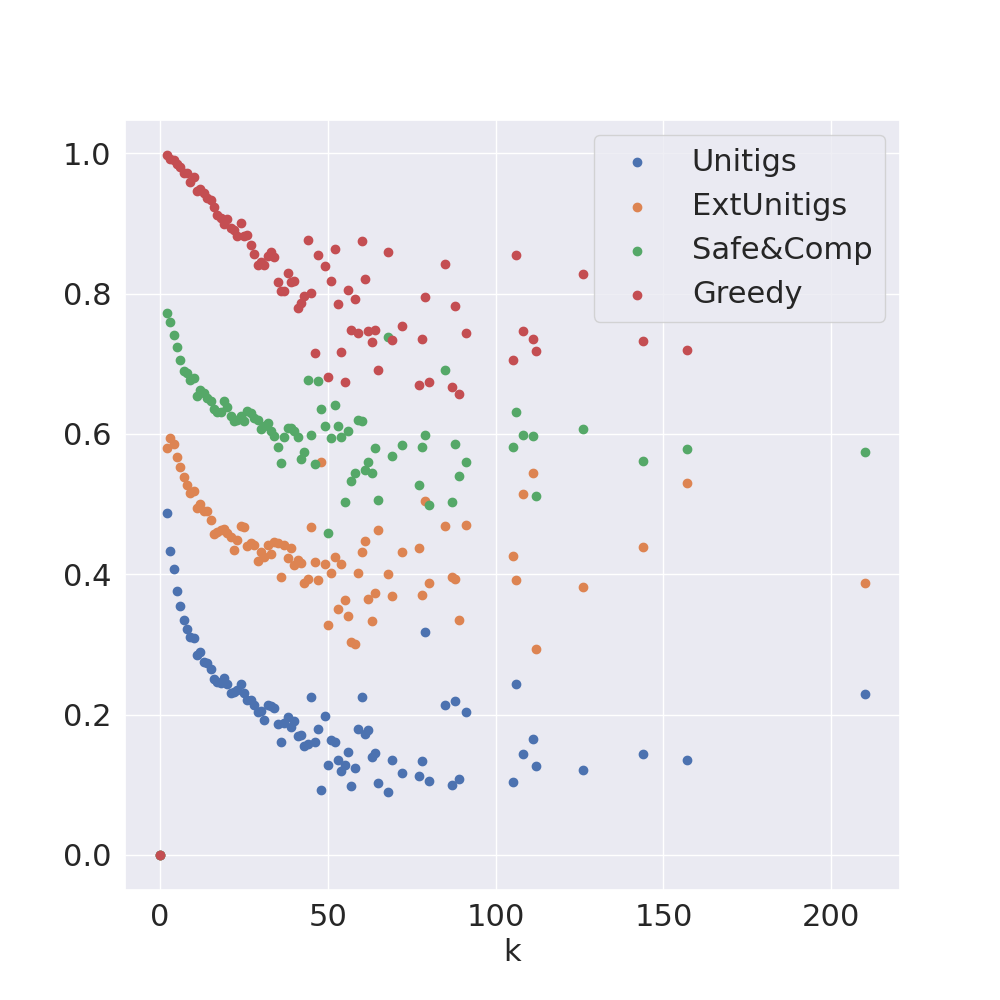}
         \caption{Maximum Relative Coverage}
         \label{fig:dataset_1_coverage_node}
     \end{subfigure}
    \begin{subfigure}[b]{0.32\textwidth}
         \centering
         \includegraphics[trim=30 25 75 90,clip,  width=\textwidth]{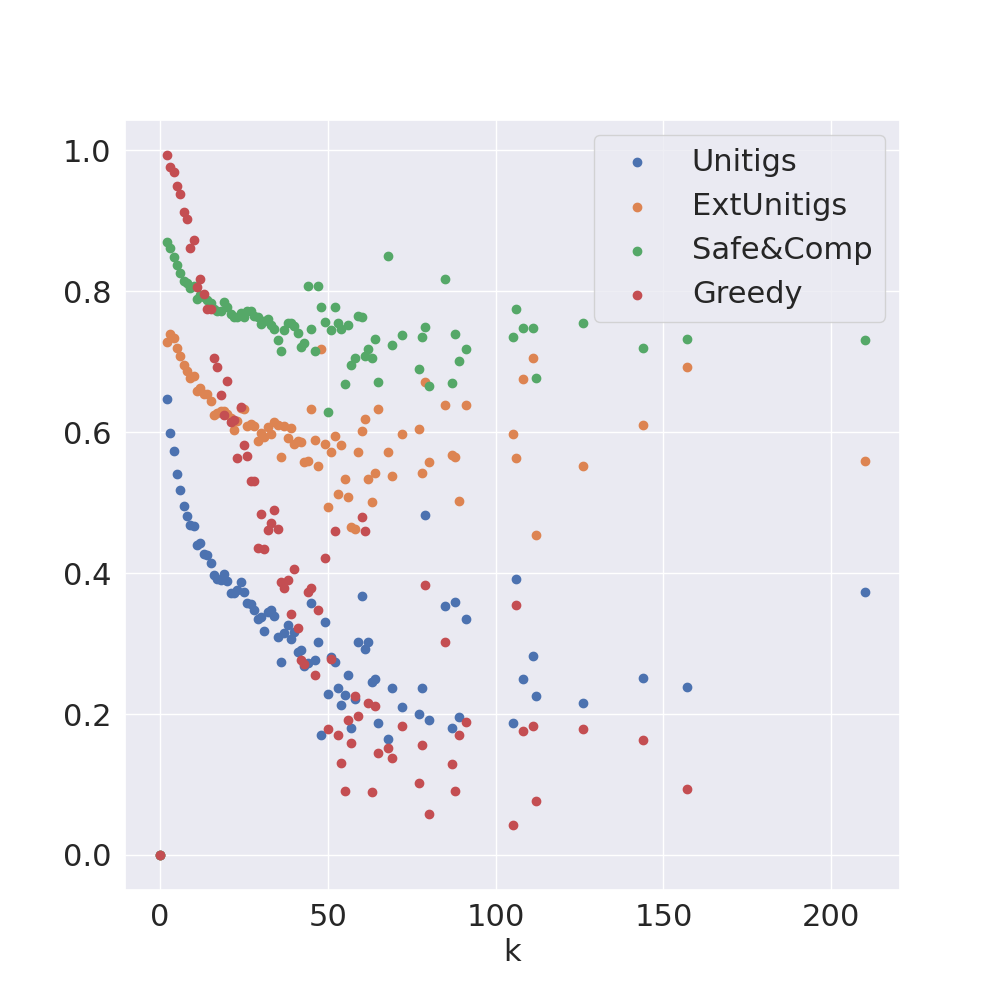}
         \caption{F-Score}
         \label{fig:dataset_1_fscore_node}
     \end{subfigure}
\caption{Evaluation metrics on graphs distributed by $k$ for the Reference-Sim dataset considering nodes.}

        \label{fig:dataset_1_cov_node}
\end{figure}

\begin{table}
\centering
\begin{tabular}{|c|c|c|c|c|c|c|}
\hline 
Graphs & Algorithm & Max. Coverage & Wt. Precision &  F-Score\\ \hline 
\multirow{3}{*}{\parbox[c]{2cm}{\begin{center}$k\geq 2$\\(100\%)\end{center}}} & Unitigs & 0.33 & 1.00 & 0.49\\
& ExtUnitigs & 0.53 & 1.00 & 0.69\\
& Safe\&Comp & 0.69 & 1.00 & 0.81\\
& Greedy & 0.96 & 0.81 & 0.86\\
\hline
\multirow{3}{*}{\parbox[c]{2cm}{\begin{center}$2\le k \le 10$\\(68\%)\end{center}}} & Unitigs & 0.37 & 1.00 & 0.53\\
& ExtUnitigs & 0.56 & 1.00 & 0.71\\
& Safe\&Comp & 0.72 & 1.00 & 0.83\\
& Greedy & 0.98 & 0.91 & 0.93\\
\hline
\multirow{3}{*}{\parbox[c]{2cm}{\begin{center}$k>10$\\(32\%)\end{center}}} & Unitigs & 0.25 & 1.00 & 0.40\\
& ExtUnitigs & 0.47 & 1.00 & 0.64\\
& Safe\&Comp & 0.64 & 1.00 & 0.78\\
& Greedy & 0.91 & 0.60 & 0.69\\
\hline
\end{tabular}
\caption{Summary of evaluation metrics for the Reference-Sim dataset, computed relative to nodes.}
\label{tab:human-nodes}
\end{table}

\section{Experimental Results Including Funnel Instances} \label{ap:funnels}
In this section we evaluate the corresponding metrics of different algorithms on the {\em complete}  Reference-Sim dataset and the {\em complete} Catfish dataset which include the funnels. 

\begin{remark}
The results when considering the complete datasets (including funnels) are diluted when compared to inferences from \cref{sec:results}. 
In this case, we expect that the differences between the algorithms become less sharp, because all algorithms solve trivial (funnel) instances perfectly, which artificially increases the precision and coverage scores. This is confirmed by comparing \cref{tab:annotated_noFunnels} to \cref{tab:human_funnels}. Without funnel instances, we observe overall F-scores range between 0.66 and 0.9; whereas the range is from 0.82 to 0.95 when including them. A similar effect occurs for Catfish data in \cref{tab:catfish_nofunnels} and \cref{tab:catfish_funnels}. 
This is also visible from coverage and F-score metrics in \Cref{fig:Funnel_dataset_2} and \Cref{fig:Funnel_dataset_1} which start from $100\%$ even for safe paths, which is not the case in corresponding figures without funnels. 
\end{remark}

\begin{figure}
     \centering
     \begin{subfigure}[b]{0.32\textwidth}
         \centering
         \includegraphics[trim=30 25 75 90,clip,  
         width=\textwidth]{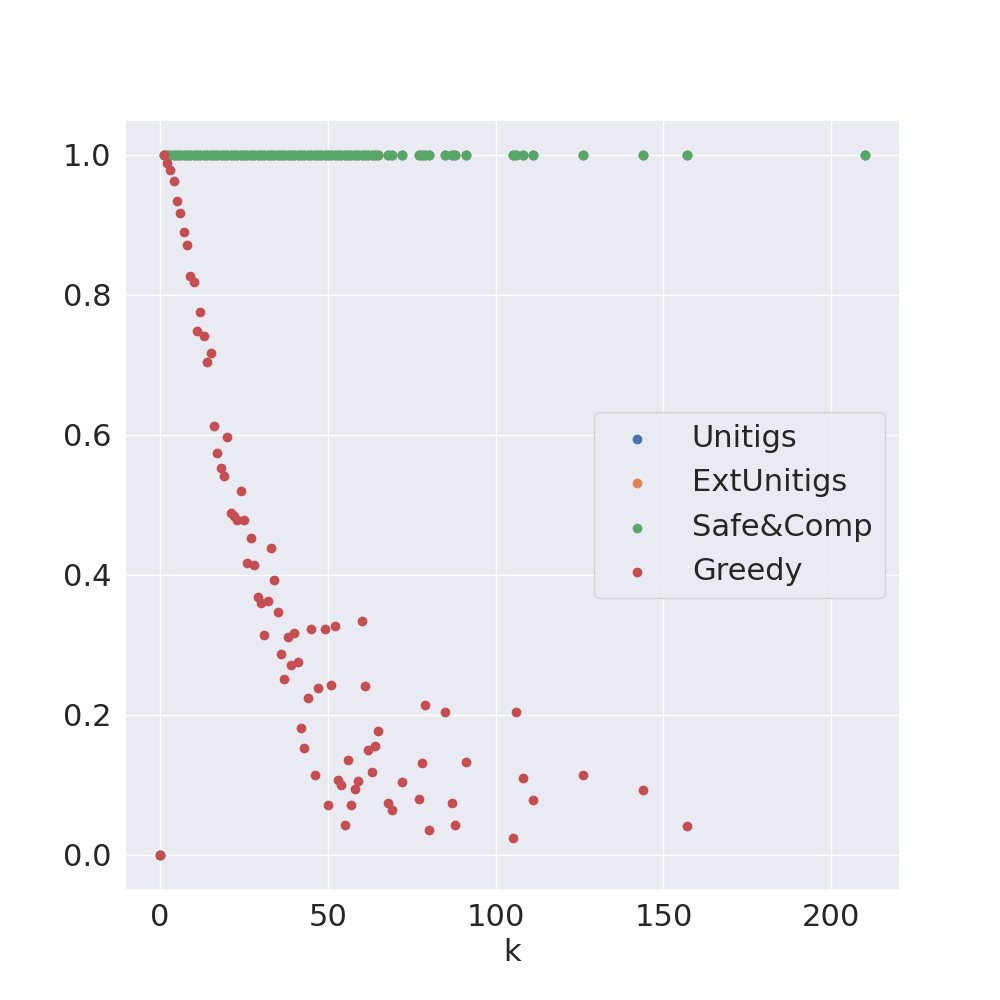}
         \caption{Weighted Precision}
          \label{fig:Funnel_dataset_2_precision}
     \end{subfigure}
     \begin{subfigure}[b]{0.32\textwidth}
         \centering
         \includegraphics[trim=30 25 75 90,clip,  width=\textwidth]{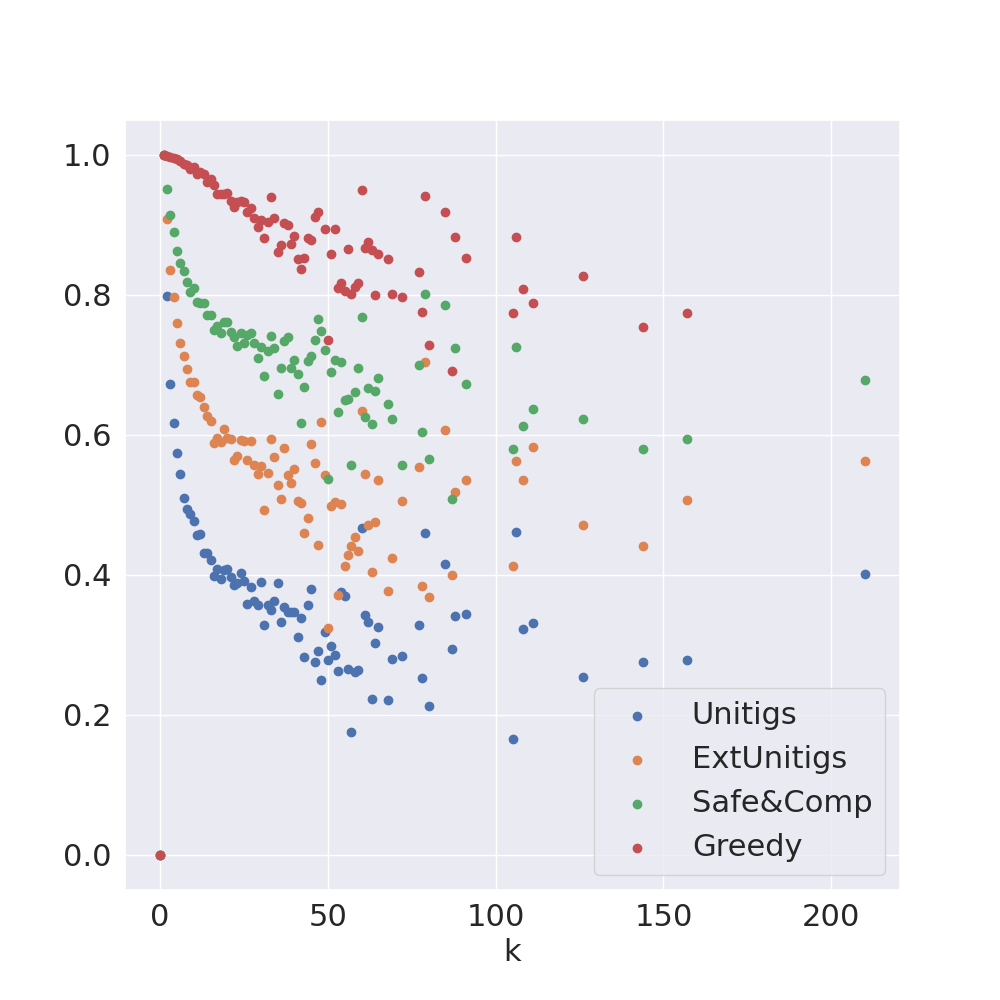}
         \caption{Maximum Relative Coverage}
         \label{fig:Funnel_dataset_2_coverage}
     \end{subfigure}
    \begin{subfigure}[b]{0.32\textwidth}
         \centering
         \includegraphics[trim=30 25 75 90,clip,  width=\textwidth]{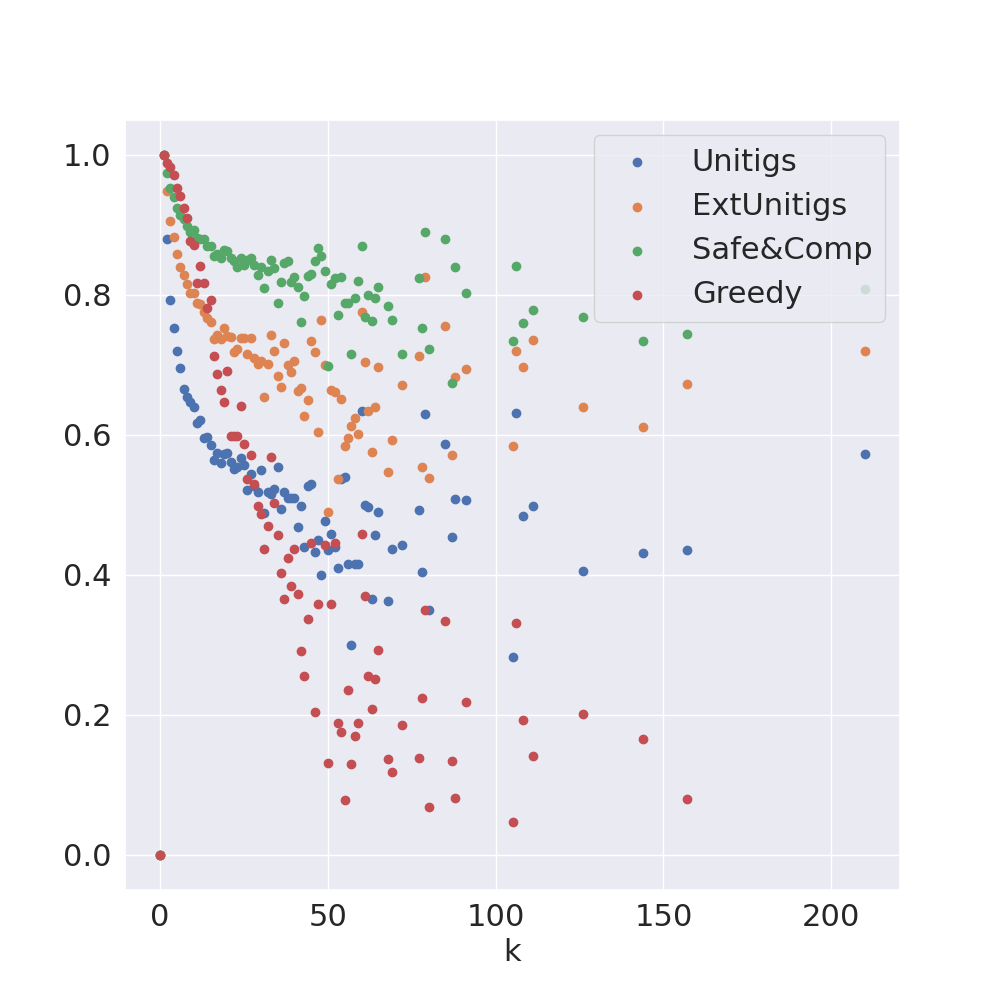}
         \caption{F-Score}
         \label{fig:Funnel_dataset_2_fscore}
     \end{subfigure}
\caption{Evaluation metrics on graphs distributed by $k$ for the {\em complete} (including funnels) Reference-Sim dataset.}
   \label{fig:Funnel_dataset_2}
\end{figure}

\begin{table}
\centering
\begin{tabular}{|c|c|c|c|c|c|c|}
\hline 
Graphs & Algorithm & Max. Coverage & Wt. Precision &  F-Score\\ \hline 
\multirow{3}{*}{\parbox[c]{2cm}{\begin{center}$k\geq 1$\\(100\%)\end{center}}} & Unitigs & 0.73 & 1.00 & 0.82\\
& ExtUnitigs & 0.84 & 1.00 & 0.90\\
& Safe\&Comp & 0.91 & 1.00 & 0.95\\
& Greedy & 0.99 & 0.91 & 0.93\\
\hline
\multirow{3}{*}{\parbox[c]{2cm}{\begin{center}$1\le k \le 10$\\(85\%)\end{center}}} & Unitigs & 0.79 & 1.00 & 0.86\\
& ExtUnitigs & 0.88 & 1.00 & 0.93\\
& Safe\&Comp & 0.93 & 1.00 & 0.96\\
& Greedy & 1.00 & 0.96 & 0.97\\
\hline
\multirow{3}{*}{\parbox[c]{2cm}{\begin{center}$k>10$\\(15\%)\end{center}}} & Unitigs & 0.41 & 1.00 & 0.58\\
& ExtUnitigs & 0.61 & 1.00 & 0.75\\
& Safe\&Comp & 0.76 & 1.00 & 0.86\\
& Greedy & 0.95 & 0.61 & 0.70\\

\hline
\end{tabular}
\caption{Summary of evaluation metrics for the {\em complete} (including funnels) Reference-Sim dataset.}
\label{tab:human_funnels}
\end{table}

\begin{figure}
     \centering
     \begin{subfigure}[b]{0.32\textwidth}
         \centering
         \includegraphics[trim=30 25 75 90,clip,  
         width=\textwidth]{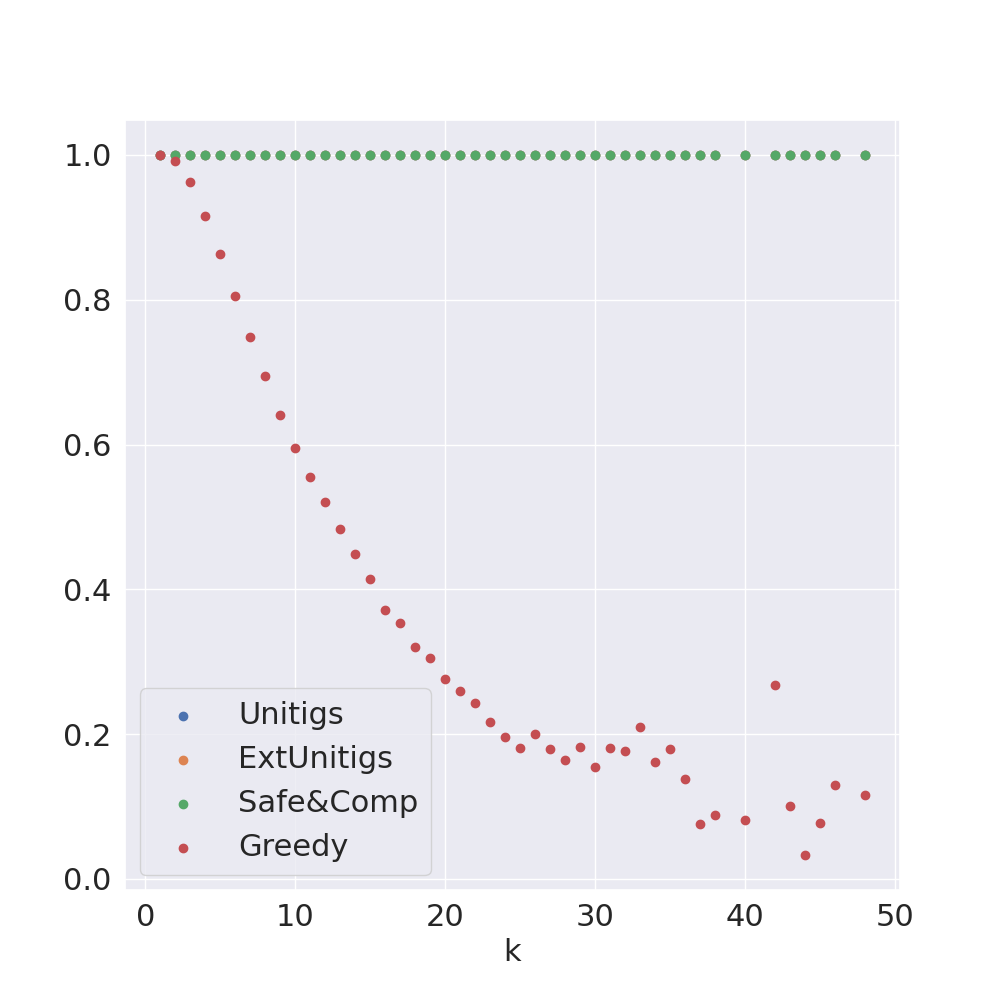}
         \caption{Weighted Precision}
          \label{fig:Funnel_dataset_1_precision}
     \end{subfigure}
     \begin{subfigure}[b]{0.32\textwidth}
         \centering
         \includegraphics[trim=30 25 75 90,clip,  width=\textwidth]{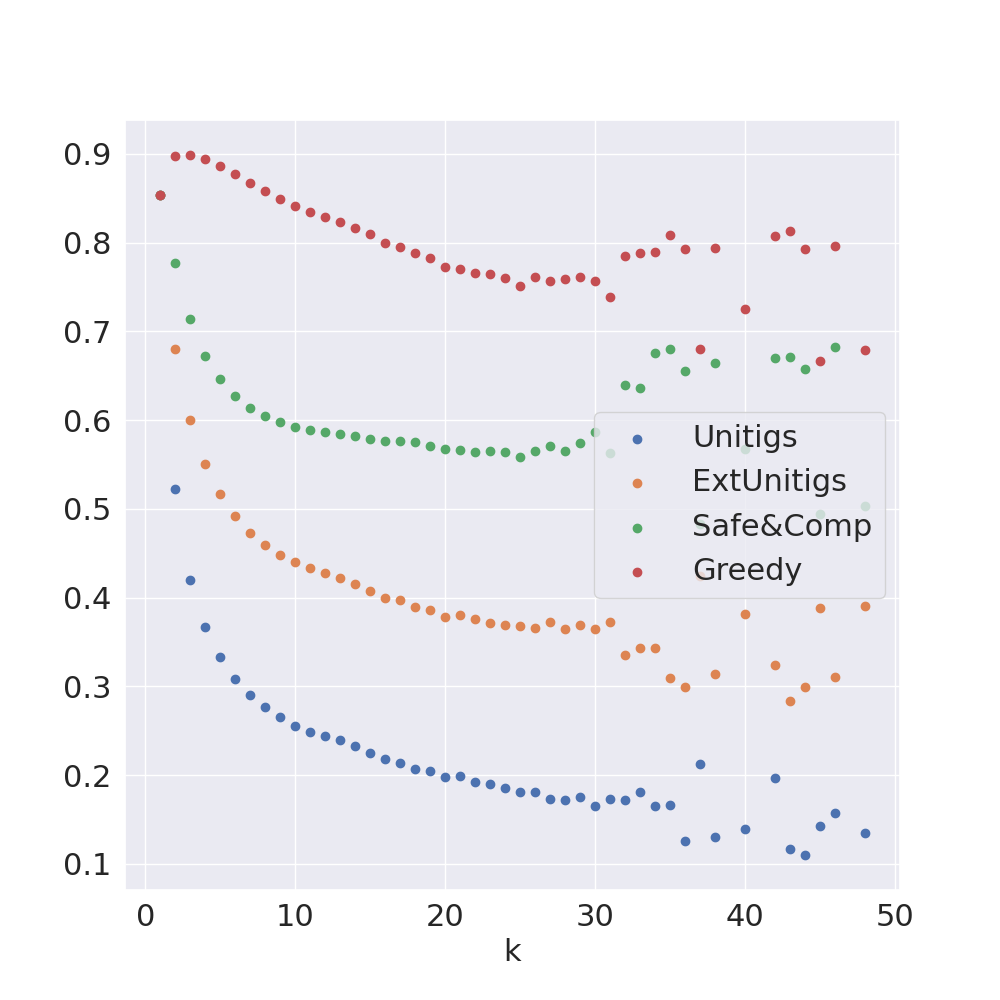}
         \caption{Maximum Relative Coverage}
         \label{fig:Funnel_dataset_1_coverage}
     \end{subfigure}
    \begin{subfigure}[b]{0.32\textwidth}
         \centering
         \includegraphics[trim=30 25 75 90,clip,  width=\textwidth]{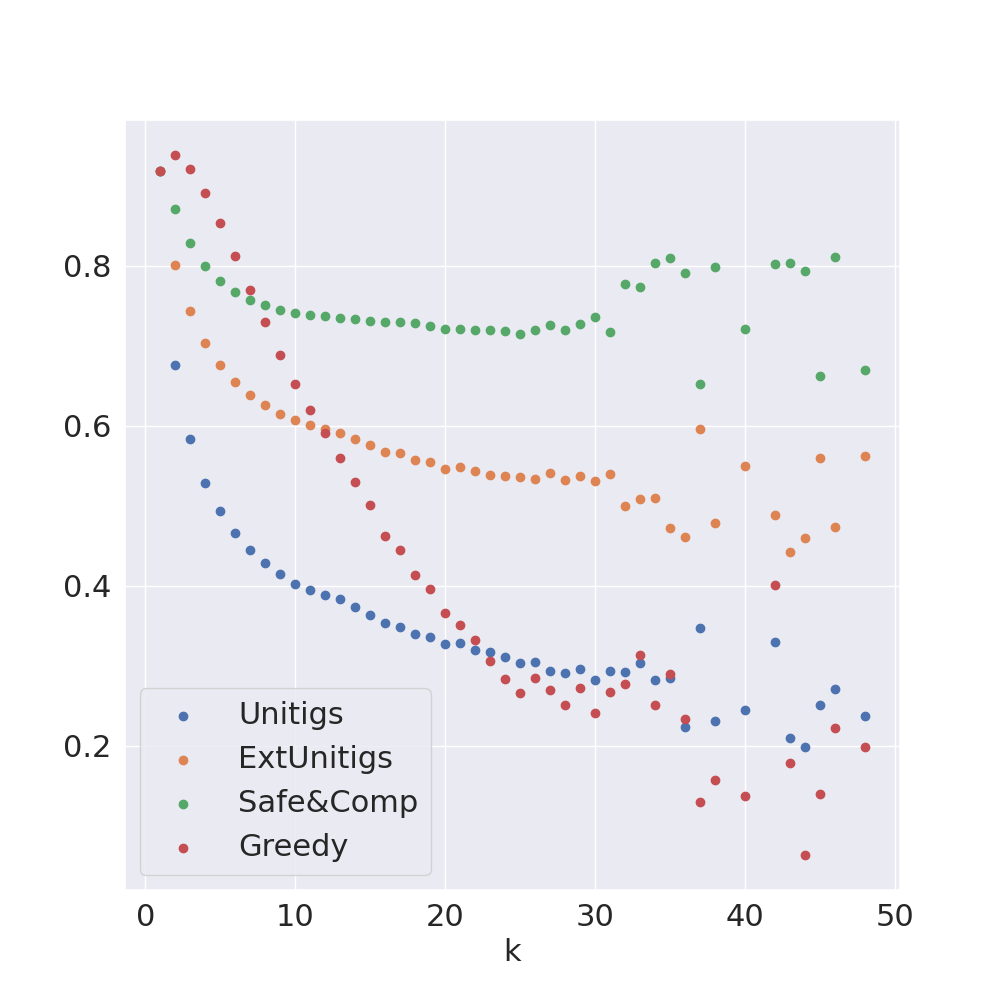}
         \caption{F-Score}
         \label{fig:Funnel_dataset_1_fscore}
     \end{subfigure}
\caption{Evaluation metrics on graphs distributed by $k$ for the {\em complete} (including funnels) Catfish dataset.}
   \label{fig:Funnel_dataset_1}
\end{figure}

\begin{table}
\centering
\begin{tabular}{|c|c|c|c|c|c|c|}
\hline 
Graphs & Algorithm & Max. Coverage & Wt. Precision &  F-Score\\ \hline 
\multirow{3}{*}{\parbox[c]{2cm}{\begin{center}$k\geq 1$\\(100\%)\end{center}}} & Unitigs & 0.64 & 1.00 & 0.75\\
& ExtUnitigs & 0.73 & 1.00 & 0.83\\
& Safe\&Comp & 0.78 & 1.00 & 0.87\\
& Greedy & 0.87 & 0.96 & 0.91\\
\hline
\multirow{3}{*}{\parbox[c]{2cm}{\begin{center}$1\le k \le 10$\\(99\%)\end{center}}} & Unitigs & 0.65 & 1.00 & 0.76\\
& ExtUnitigs & 0.73 & 1.00 & 0.83\\
& Safe\&Comp & 0.79 & 1.00 & 0.88\\
& Greedy & 0.87 & 0.97 & 0.91\\

\hline
\multirow{3}{*}{\parbox[c]{2cm}{\begin{center}$k>10$\\(1\%)\end{center}}} & Unitigs & 0.24 & 1.00 & 0.38\\
& ExtUnitigs & 0.42 & 1.00 & 0.59\\
& Safe\&Comp & 0.58 & 1.00 & 0.74\\
& Greedy & 0.82 & 0.48 & 0.56\\

\hline
\end{tabular}
\caption{Summary of evaluation metrics for the {\em complete} (including funnels) Catfish dataset, computed relative to nodes.}
\label{tab:catfish_funnels}
\end{table}

\end{document}